\newtheorem{theorem}{Theorem}[section]
\newtheorem{definition}[theorem]{Definition}
\newtheorem{lemma}[theorem]{Lemma}
\newtheorem{proposition}[theorem]{Proposition}
\newtheorem{corollary}[theorem]{Corollary}
\newtheorem{remark}[theorem]{Remark}
\newenvironment{proof}[1][Proof]{\textsc{#1.\ }} {\ \rule{0.5em}{0.5em}}
\numberwithin{equation}{section}
\numberwithin{figure}{section}
\begin{document}

\title{\textbf{An existence theorem for the Cauchy problem on the light-cone
for the vacuum Einstein equations with near-round analytic data} }
\author{Yvonne Choquet-Bruhat \\
Acad\'emie des Sciences, Paris \and Piotr T. Chru\'{s}ciel \\
Universit\"{a}t Wien \and Jos\'e M. Mart\'in-Garc\'ia \\
Institut d'Astrophysique de Paris, and \\
Laboratoire Univers et Th\'{e}ories, Meudon}
\maketitle

\bigskip

\section{Introduction}

In the paper~\cite{CCM2}, which will in what follows be referred to as I, we
studied the Cauchy problem for the Einstein equations with data on a
characteristic cone $C_{O}$. We used the tensorial splitting of the Ricci
tensor of a Lorentzian metric $g$ on a manifold $V$ as the sum of a
quasidiagonal hyperbolic system acting on $g$ and a linear first order
operator acting on a vector $H,$ called the wave-gauge vector. The vector $H$
vanishes if $g$ is in wave gauge; that is, if the identity map is a wave map
from $(V,g)$ onto $(V,\hat{g})$, with $\hat{g}$ some given metric, which we
have chosen to be Minkowski. The data needed for the reduced PDEs is the
trace, which we denote by $\bar{g}$, of $g$ on $C_{O}$. However, because of
the constraints, the intrinsic, geometric, data is a degenerate quadratic
form $\tilde{g}$ on $C_{O}$. Given $\tilde{g}$, the trace $\bar{g}$ is
determined through a hierarchical system of ordinary differential equations%
\footnote{%
For previous writing of these equations in the case of two intersecting
surfaces in four-dimensional spacetime see Rendall~\cite{RendallCIVP} and
Damour-Schmidt~\cite{DamourSchmidt}.} along the rays of $C_{O}$, deduced
from the contraction of the Einstein tensor with a tangent to the rays,
which we have written explicitly and solved. We have called these equations
the wave map gauge constraints and shown that they are necessary and
sufficient conditions for the solutions of the hyperbolic system to satisfy
the full Einstein equations. We have also proved local geometric uniqueness
of a solution $g$ of the vacuum Einstein equations inducing a given $\tilde{g%
}$ (for details see I). Further references to previous works on the problem
at hand can be found in I.

Existence theorems known for quasilinear wave equations with data on a
characteristic cone give also existence theorems for the Einstein equations,
if the initial data is Minkowski in a neighbourhood of the vertex. For more
general data problems arise due to the apparent discrepancy between the
functional requirements on the characteristic data of the hyperbolic system
and the properties of the solutions of the constraints, due to the
singularity of the cone $C_{O}$ at its vertex $O$. The aim of this work is
to  make progress towards resolving this issue, and provide a sufficient
condition for the validity of an existence theorem in a neighbourhood of $O$
under conditions alternative to the fast-decay conditions of~\cite{CCM3}.
More precisely, we prove that analytic initial data arising from a metric
satisfying (\ref{4.3})-(\ref{4.4}) together with the \textquotedblleft
near-roundness" condition of Definition~\ref{Def} lead to a solution of the
vacuum Einstein equations to the future of the light-cone.

\section{Cauchy problem on a characteristic cone for quasilinear wave
equations}

The reduced Einstein equations in wave-map gauge and Minkowski target are a
quasi-diagonal, quasi-linear second order system for a set $v$ of scalar
functions $v^{I}$, $I=1,\ldots ,N$, on $\mathbf{R}^{n+1}$ of the form
\begin{equation}
A^{\lambda \mu }(y,v)D_{\lambda \mu }^{2}v+f(y,v,Dv)=0,\quad y=(y^{\lambda
})\in \mathbf{R}^{n+1},\quad n\geq 2,\;\quad f=(f^{I})  \label{2.1}
\end{equation}
If the target is the Minkowski metric and takes in the coordinates $%
y^{\alpha }$ the canonical form
\begin{equation}
\eta \equiv -(dy^{0})^{2}+\sum_{i=1}^{n}(dy^{i})^{2},  \label{2.2}
\end{equation}
then,
\begin{equation}
Dv=(\frac{\partial v^{I}}{\partial y^{\lambda }}),\qquad D_{\lambda \mu
}^{2}v=(\frac{\partial ^{2}v^{I}}{\partial y^{\lambda }\partial y^{\mu }}%
),\qquad \lambda ,\mu =0,1,\ldots ,n  \label{2.3}
\end{equation}
\emph{We will underline components in these $y^{\alpha}$ coordinates.}

In the case of the Einstein equations the functions $A^{\lambda\mu}\equiv
g^{\lambda\mu}$ do not depend directly on $y$, they are analytic in $v$ in
an open set $W\subset \mathbf{R}^{N}$. For $v\in W$ the quadratic form $%
g^{\lambda \mu }$ is of Lorentzian signature. The functions $f^{I}$ are
analytic in $v\in W$ and $Dv\in \mathbf{R}^{(n+1)N}$, they do not depend
directly on $y$ in vacuum.

The characteristic cone $C_O$ of vertex $O$ for a Lorentzian metric $g$ is
the set covered by future directed null geodesics issued from $O$. We choose
coordinates $y^{\alpha }$ such that the coordinates of $O$ are $y^{\alpha}=0$
and the components $A^{\lambda\mu}(0,0)$ take the diagonal Minkowskian
values, $(-1,1,\ldots ,1)$. If $v$ is $C^{1,1}$ in a neighbourhood $U$ of $O$
and takes its values in $W$ there is an eventually smaller neighbourhood of $%
O$, still denoted $U$, such that $C_{O}\cap U$ is an $n$ dimensional
manifold, differentiable except at $O$, and there exist in $U$ coordinates $%
y:=(y^{\alpha })\equiv (y^{0}$, $y^{i}$, $i=1,\ldots ,n)$ in which $C_{O}$
is represented by the equation of a Minkowskian cone with vertex $O$,
\begin{equation}
C_{O}:=\{r-y^{0}=0\},\quad r:=\{\sum (y^{i})^{2}\}^{\frac{1}{2}},
\label{2.4}
\end{equation}
and the null rays of $C_{O}$ represented by the generators of the
Minkowskian cone, i.e. tangent to the vector $\ell $ with components $%
\underline{\ell^{0}}=1$, $\underline{\ell^{i}}=r^{-1}y^{i}$. Inspired by
this result and following previous authors we will set the Cauchy problem
for the equations (\ref{2.1}) on a characteristic cone as the search of a
solution which takes given values on a manifold represented by an equation
of the form (\ref{2.4}), that is a set $v$ such that
\begin{equation}
\bar{v}=\varphi ,  \label{2.5}
\end{equation}
where \emph{overlining means restriction to} $C_{O}$. The function $\varphi$
takes its values in $W$ and is such that $\ell $ is a null vector for $\bar{A%
}$, i.e.when $\bar{A}\equiv \bar{g}$
\begin{equation}
\underline{\ell ^{\mu }\ell ^{\nu }\bar{g}_{\mu \nu }}=\underline{\bar{g}%
_{00}}+2r^{-1}y^{i}\underline{\bar{g}_{0i}}+r^{-2}y^{i}y^{j}\underline{\bar{g%
}_{ij}}=0.  \label{2.6}
\end{equation}

We use the following notations:
\begin{align*}
C_{O}^{T} & :=C_{O}\cap \{0\leq t:=y^{0}\leq T\}\;. \\
Y_{O} & :=\{y^{0}>r\}\;,\quad \text{the interior of\ }C_{O}\;, \\
Y_{O}^{T} & :=Y_{O}\cap \{0\leq y^{0}\leq T\}\;.
\end{align*}
and we set
\begin{align*}
\Sigma_{\tau}& :=C_{O}\cap \{y^{0}=\tau \}\;,\quad \text{diffeomorphic to\ }%
S^{n-1}\;, \\
S_{\tau}& :=Y_{O}\cap \{y^{0}=\tau \}\;,\quad \text{diffeomorphic to the
ball\ }B^{n-1}\;.
\end{align*}

We recall the following theorem, which applies in particular to the reduced
Einstein equations

\begin{theorem}
\label{ThDossanew1} Consider the problem (\ref{2.1}, \ref{2.5}). Suppose
that:

1. There is an open set $U\times W\subset \mathbf{R}^{n+1}\times \mathbf{R}%
^{N}$, $Y_{O}^{T}\subset U$ where the functions $g^{\lambda\mu}$ are smooth
in $y$ and $v$. The function $f$ is smooth\footnote{%
Smooth means $C^{m}$, with $m$ some integer depending on the problem at hand
and the considered function. In particular $C^{\infty}$ and $C^{\omega}$
(real analytic functions) are smooth.} in $y\in U$ and $v\in W$ and in $%
Dv\in \mathbf{R}^{(n+1)N}$.

2. For $(y,v)\in U\times W$ the quadratic form $g^{\lambda\mu}$ has
Lorentzian signature; it takes the Minkowskian values for $y=0$ and $v=0$.
It holds that $\varphi(O)=0$

3. a. The function $\varphi$ takes its values in $W$. The cone $C_{O}^{T}$
is null for the metric $g^{\lambda\mu}(y,\varphi)$.

b. $\varphi $ is the trace on $C_{O}^{T}$ of a smooth function in $U$.

Then there is a number $0<T_{0}\leq T<+\infty $ such that the problem (\ref%
{2.1}, \ref{2.5}) has one and only one solution $v$ in $Y_{O}^{T_{0}}$ which
can be extended by continuity to a smooth function defined on a
neighbourhood of the origin in ${\mathbf{R}}^{n+1}$.

If $\varphi $ is small enough in appropriate norms, then $T_{0}=T$.
\end{theorem}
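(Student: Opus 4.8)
The plan is to convert the characteristic problem into a family of a-priori energy estimates on the solid cone $Y_O^T$ foliated by the balls $S_\tau$, $0\le\tau\le T$, and then build the solution by iteration. First I would use hypothesis 3b to extend $\varphi$ to a function $\psi$ smooth on $U$ and set $v=\psi+u$; then $u$ vanishes on $C_O$ and solves a system of the same type, $A^{\lambda\mu}(y,\psi+u)\,D^2_{\lambda\mu}u+\tilde f(y,u,Du)=0$, $\bar u=0$, with $\tilde f$ absorbing the first and second derivatives of $\psi$. By hypotheses 2 and 3a, after the normalization $A^{\lambda\mu}(0,0)=\mathrm{diag}(-1,1,\dots,1)$ the surface $\{r=y^0\}$ is, for $U$ small, the characteristic cone of $g^{\lambda\mu}(y,\psi+u)$ and is null for it, so the balls $S_\tau$ are spacelike for the actual metric in a neighbourhood of $O$; this is what will give the flux term through the cone a favourable sign.

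Next I would develop the linear theory. For the linearized operator $L_{(0)}w:=A_{(0)}^{\lambda\mu}(y)D^2_{\lambda\mu}w+b_{(0)}^\lambda D_\lambda w+c_{(0)}w=\rho$ with $\bar w=0$, I would multiply by a future-timelike vector field applied to $w$ (e.g.\ $T^\mu D_\mu w$ with $T$ close to $\partial_0$), form the associated energy current, and integrate its divergence over $Y_O^\tau$. The boundary is $S_\tau\cup C_O^\tau$: the contribution of $S_\tau$ is a coercive energy $E(\tau)=\int_{S_\tau}(\dots)$, and the contribution of the characteristic part $C_O^\tau$ is, because of the null character of the cone together with the future-timelike multiplier, of a definite (favourable) sign, hence can be discarded. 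This yields $E(\tau)\le C\int_0^\tau E(s)\,ds+(\text{source})$, and, after commuting $L_{(0)}$ with tangential and transversal derivatives adapted to the foliation, analogous inequalities for the higher-order energies $E_k(\tau)=\sum_{|\alpha|\le k}\int_{S_\tau}|D^\alpha w|^2(\dots)$. A Gronwall argument bounds $E_k(\tau)$ on $[0,T]$ in terms of the source and of the limiting data at $\tau=0$, which is zero since $S_0=\{O\}$; combined with a duality/Galerkin argument (or a first-order symmetric-hyperbolic reduction) this gives existence and uniqueness for the linear characteristic problem in the corresponding energy spaces.

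With the linear theory in hand I would solve (\ref{2.1},\ref{2.5}) by the Picard iteration $u_{(0)}=0$, $L_{(k)}u_{(k+1)}=(\text{the lower-order terms in }u_{(k)})$, where $L_{(k)}$ has coefficients evaluated at $u_{(k)}$: using the smoothness of $g^{\lambda\mu}$ and $f$ and the a-priori estimates, the iterates stay in a fixed ball of a high-order energy space over $Y_O^{T_0}$ for $T_0$ small, and contract in a lower-order norm, so $u_{(k)}\to u$; Sobolev embedding on the $S_\tau$ (uniformly in $\tau$) and interior hyperbolic regularity upgrade $u$ to a smooth solution, and propagation of smoothness from the data — together with analyticity in the Einstein case — gives the extension to a smooth function near $O$. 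The last assertion is a continuation argument: if $\varphi$ is small in the relevant high-order norm on $C_O^T$, then $\psi$, and hence $v=\psi+u$, stays in $W$ and the $A^{\lambda\mu}$ stay uniformly close to Minkowski on all of $Y_O^T$, so the bootstrap closes on the full interval and one may take $T_0=T$.

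The main obstacle is the vertex $O$, where the slices $S_\tau$ collapse to a point: one cannot treat $\tau=0$ as an ordinary initial surface, and one must check that none of the energy inequalities — in particular the commutator terms produced when $L_{(0)}$ is differentiated — carry inverse powers of $\tau$ (equivalently of $r$) that would obstruct integration down to $\tau=0$. Controlling this requires Hardy-type inequalities on the shrinking balls and careful use of the precise structure of the equation and of the smoothing near $O$ (i.e.\ of hypotheses 2 and 3, which guarantee that $\varphi$ is the trace of a smooth spacetime function satisfying the null condition \eqref{2.6}). This is exactly the subtlety flagged in the introduction — the discrepancy, at the tip of the cone, between the natural function spaces for the hyperbolic system and the behaviour forced by the constraints — and it is what separates the characteristic-cone problem from the much easier case of two transverse null hyperplanes.
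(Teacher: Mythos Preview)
The paper does not prove this theorem at all: it is \emph{recalled} as a known result (the Cagnac--Dossa theorem), introduced with the words ``We recall the following theorem, which applies in particular to the reduced Einstein equations'', and is used throughout as a black box. So there is no ``paper's own proof'' to compare against; the authors' contribution lies elsewhere --- in showing that the constraint-solving procedure produces data $\varphi$ satisfying hypothesis~3b, so that the theorem can be invoked.

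Your sketch is a reasonable outline of how results of this type are established in the literature (extension of the data, energy estimates on the solid cone with the null boundary contributing a term of good sign, commutation, Picard iteration, continuation), and you correctly isolate the vertex as the genuine analytic difficulty. But what you have written is a plan, not a proof: each of the steps you list --- the higher-order commutator control without loss of powers of $\tau$, the Hardy-type inequalities on shrinking balls, the actual existence argument for the linear problem with data only on the cone, the regularity bootstrap up to and including the tip --- is a substantial piece of work in the original papers of Cagnac and Dossa, and none of it is carried out here. In particular, the assertion that the cone flux ``can be discarded'' is safe for the base-level energy but becomes delicate after commutation, since tangential and transversal derivatives do not in general preserve the sign of the boundary term; this is precisely where the machinery specific to the characteristic-cone problem enters.

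So: your approach is in the right spirit and broadly matches the strategy behind the cited theorem, but for the purposes of this paper no proof is expected --- the statement is quoted, not proved.
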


\section{Null adapted coordinates}

It has been shown\footnote{%
See I and references therein.} that the constraints are easier
to solve in coordinates $x^{\alpha }$ adapted to the null
structure of $C_{O}$, defined by
\begin{equation}
x^{0}=r-y^{0},\qquad x^{1}=r\text{ \ \ and }x^{A}=\mu ^{A}(r^{-1}y^{i})
\label{3.1}
\end{equation}
$A=2,...n$, local coordinates on the sphere $S^{n-1}$, or angular polar
coordinates. Conversely
\begin{equation*}
y^{0}=x^{1}-x^{0},\qquad y^{i}=r\Theta ^{i}(x^{A})\qquad \text{with}\quad
\sum_{i=1}^{n}\Theta ^{i}(x^{A})^{2}=1.
\end{equation*}
In the $x$ coordinates the Minkowski metric (\ref{2.2}) reads
\begin{equation}
\eta \equiv -(dx^{0})^{2}+2dx^{0}dx^{1}+(x^{1})^{2}s_{n-1},  \label{3.2}
\end{equation}
with
\begin{equation*}
s_{n-1}:=s_{AB}dx^{A}dx^{B}\;,\quad \text{the metric of the round sphere\ }%
S^{n-1}.
\end{equation*}
Recall that in these coordinates the non zero Christoffel symbols of the
Minkowki metric are, with $S_{BC}^{A}$ the Christoffel symbols of the metric
$s$,
\begin{equation}
\hat{\Gamma}_{1A}^{B}\equiv \frac{1}{x^{1}}\delta _{A}^{B}\;,\qquad \hat{%
\Gamma}_{AC}^{B}\equiv S_{AC}^{B}\;,\text{ \ \ }\hat{\Gamma}_{AB}^{0}\equiv
-x^{1}s_{AB}\;,\qquad \hat{\Gamma}_{AB}^{1}\equiv -x^{1}s_{AB}\;.
\label{3.3}
\end{equation}
In the general case, the null geodesics issued from $O$ have still equation $%
x^{0}=0$, $x^{A}=$constant, so that $\ell :=\frac{\partial }{\partial x^{1}}$
is tangent to those geodesics. The trace $\bar{g}$ on $C_{O}$ of the
spacetime metric $g$ that we are going to construct is such that $\overline{g%
}_{11}=0$ and $\bar{g}_{1A}=0;$ we use the notation
\begin{equation}
\bar{g}\equiv \bar{g}_{00}(dx^{0})^{2}+2\nu _{0}dx^{0}dx^{1}+2\nu
_{A}dx^{0}dx^{A}+\bar{g}_{AB}dx^{A}dx^{B},  \label{3.4}
\end{equation}
We emphasize that our assumption that $\bar{g}$ is given by (\ref{3.4}) is
no geometric restriction for a metric $g$ to have such a trace on a null
cone $x^{0}=0$.

The Lorentzian metric $g$ induces on $C_{O}$ a degenerate quadratic form $%
\tilde{g}$ which reads in coordinates $x^{1}$, $x^{A}$
\begin{equation}
\tilde{g}\equiv \tilde{g}_{AB}dx^{A}dx^{B},  \label{3.5}
\end{equation}
i.e.\ $\tilde{g}_{11}\equiv \tilde{g}_{1A}\equiv 0$ while $\tilde{g}%
_{AB}dx^{A}dx^{B}\equiv $ $\bar{g}_{AB}dx^{A}dx^{B}$ is an
$x^{1}$-dependent Riemannian metric on $S^{n-1}$induced on each
$\Sigma _{t}$ by $\tilde{g}$, we denote it by
$\tilde{g}_{\Sigma }$. While $\tilde{g}$ is intrinsically
defined, it is not so for $\bar{g}_{00}$, $\nu _{0}$, $\nu
_{A}$, they are gauge-dependent quantities.

Note that $\tilde{g}$ \ has a more complicated expression in coordinates $%
y^{i}$ on $C_{O}$. Since the inclusion mapping of $C_{O}$ in the coordinates
$y^{\alpha }$ is $y^{0}=r$ hence $\frac{\partial y^{0}}{\partial y^{i}}=%
\frac{y^{i}}{r}$, it holds that
\begin{equation}
\tilde{g}\equiv \underline{\tilde{g}_{ij}}dy^{i}dy^{j},\text{ \ with \ }%
\underline{\tilde{g}_{ij}}\equiv r^{-2}y^{i}y^{j}\underline{\bar{g}_{00}}%
+r^{-1}(y^{j}\underline{\bar{g}_{0i}}+y^{i}\underline{\bar{g}_{0j}})+%
\underline{\bar{g}_{ij}}.  \label{3.6}
\end{equation}

For Theorem~\ref{ThDossanew1} to apply to the wave-gauge
reduced Einstein equations, the components of the initial data
in the $y$ coordinates must be the trace on $C_{O}$ of smooth
spacetime functions. The solution of the reduced equations
satisfy the full Einstein equations if and only if these
initial data satisfy the wave-map gauge constraints. We have
constructed in I these data as solutions of ODE in adapted null
$x$ coordinates, which are admissible coordinates for
$\mathbf{R}^{n+1}$ only for $r>0$. The change of coordinates
from $x$ to $y$, smooth for $r>0$, is
recalled below; the components of a spacetime tensor $T$ in the coordinates $%
x$ are denoted $T_{\alpha \beta }$ while in the coordinates $y$ they are
denoted $\underline{T_{\alpha \beta }}$.

\begin{lemma}
\label{lemma3.1} It holds that:
\begin{equation*}
T_{00}\equiv \underline{T_{00}},\quad T_{11}\equiv \underline{T_{00}}+2\frac{%
y^{i}}{r}\underline{T_{0i}}+\frac{y^{i}}{r}\frac{y^{j}}{r}\underline{T_{ij}}%
,\quad T_{01}\equiv -(\underline{T_{00}}+\underline{T_{0i}}\Theta ^{i}),
\end{equation*}
\begin{equation*}
T_{0A}\equiv -r\frac{\partial \Theta ^{i}}{\partial x^{A}}\underline{T_{0i}}%
,\quad T_{1A}\equiv r\frac{\partial \Theta ^{i}}{\partial x^{A}}(\underline{%
T_{0i}}+\Theta ^{j}\underline{T_{ij}}),\quad T_{AB}\equiv \underline{T_{ij}}%
r^{2}\frac{\partial \Theta ^{i}}{\partial x^{A}}\frac{\partial \Theta ^{j}}{%
\partial x^{B}}\;.
\end{equation*}
Conversely, if $T_{1A}\equiv T_{11}\equiv 0$
\begin{equation*}
\underline{T_{00}}\equiv T_{00},\quad \underline{T_{0i}}\equiv
-(T_{00}+T_{01})r^{-1}y^{i}-T_{0A}\frac{\partial x^{A}}{\partial y^{i}}\;,
\end{equation*}
\begin{equation*}
\underline{T_{ij}}=(T_{00}+2T_{01})r^{-2}y^{i}y^{j}+T_{0A}r^{-1}(y^{i}\frac{%
\partial x^{A}}{\partial y^{j}}+y^{j}\frac{\partial x^{A}}{\partial y^{i}}%
)+T_{AB}\frac{\partial x^{A}}{\partial y^{i}}\frac{\partial x^{B}}{\partial
y^{j}}.
\end{equation*}
\end{lemma}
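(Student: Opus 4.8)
The statement is nothing but the transformation law for a symmetric covariant $2$-tensor under the change of variables \eqref{3.1} and its inverse, so the plan is to compute the two Jacobian matrices and substitute; the only care needed is bookkeeping with the index split $(0,1,A)$ and the factors $2$ in the mixed components. First I would record the partial derivatives of $y$ with respect to $x$ from the inverse of \eqref{3.1}, namely $y^{0}=x^{1}-x^{0}$ and $y^{i}=x^{1}\Theta^{i}(x^{A})$ with $x^{1}=r$ and $\Theta^{i}=y^{i}/r$, which gives
\begin{equation*}
\frac{\partial y^{0}}{\partial x^{0}}=-1,\quad \frac{\partial y^{0}}{\partial x^{1}}=1,\quad \frac{\partial y^{0}}{\partial x^{A}}=0,\quad \frac{\partial y^{i}}{\partial x^{0}}=0,\quad \frac{\partial y^{i}}{\partial x^{1}}=\Theta^{i}=\frac{y^{i}}{r},\quad \frac{\partial y^{i}}{\partial x^{A}}=r\,\frac{\partial \Theta^{i}}{\partial x^{A}}.
\end{equation*}
Plugging these into $T_{\alpha\beta}=\dfrac{\partial y^{\mu}}{\partial x^{\alpha}}\dfrac{\partial y^{\nu}}{\partial x^{\beta}}\,\underline{T_{\mu\nu}}$ and using $\Theta^{i}=y^{i}/r$ yields the six formulas of the first block essentially by inspection: only $\partial y^{0}/\partial x^{0}$ survives in $T_{00}$, giving $T_{00}=\underline{T_{00}}$; the $x^{1}$-column being $(1,\Theta^{i})$ gives $T_{11}=\underline{T_{00}}+2\Theta^{i}\underline{T_{0i}}+\Theta^{i}\Theta^{j}\underline{T_{ij}}$; and the remaining products produce $T_{01},T_{0A},T_{1A},T_{AB}$ exactly as stated.

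For the converse I would instead differentiate \eqref{3.1} directly, $x^{0}=r-y^{0}$, $x^{1}=r$, $x^{A}=\mu^{A}(r^{-1}y^{i})$, using $\partial r/\partial y^{0}=0$ and $\partial r/\partial y^{i}=y^{i}/r$, to obtain
\begin{equation*}
\frac{\partial x^{0}}{\partial y^{0}}=-1,\quad \frac{\partial x^{1}}{\partial y^{0}}=\frac{\partial x^{A}}{\partial y^{0}}=0,\quad \frac{\partial x^{0}}{\partial y^{i}}=\frac{\partial x^{1}}{\partial y^{i}}=\frac{y^{i}}{r},
\end{equation*}
with the entries $\partial x^{A}/\partial y^{i}$ left as they stand. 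Substituting into $\underline{T_{\mu\nu}}=\dfrac{\partial x^{\alpha}}{\partial y^{\mu}}\dfrac{\partial x^{\beta}}{\partial y^{\nu}}\,T_{\alpha\beta}$ gives $\underline{T_{00}}=T_{00}$ at once, and $\underline{T_{0i}}=-\dfrac{\partial x^{\beta}}{\partial y^{i}}T_{0\beta}=-(T_{00}+T_{01})r^{-1}y^{i}-T_{0A}\,\partial x^{A}/\partial y^{i}$. For $\underline{T_{ij}}$ the hypothesis $T_{11}=T_{1A}=0$ kills every term carrying a $\partial x^{1}/\partial y$ factor except the ones paired with $T_{00}$ and $T_{01}$; since $\partial x^{0}/\partial y^{i}=\partial x^{1}/\partial y^{i}=y^{i}/r$, the $T_{00}$, $T_{01}$ and $T_{0A}$ contributions collapse into $(T_{00}+2T_{01})r^{-2}y^{i}y^{j}$ and $T_{0A}r^{-1}(y^{i}\partial x^{A}/\partial y^{j}+y^{j}\partial x^{A}/\partial y^{i})$, while the $T_{AB}$ term is untouched, which is the claimed identity.

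There is no genuine obstacle: this is routine linear algebra, valid pointwise wherever $r>0$ so that both charts are regular. The only thing that could trip one up is the combinatorics of the symmetric off-diagonal factors and keeping straight which of $\Theta^{i}$, $y^{i}/r$, $\partial\Theta^{i}/\partial x^{A}$, $\partial x^{A}/\partial y^{i}$ occurs where. As an internal consistency check one may note that the two Jacobian matrices above are mutually inverse, so the two blocks of formulas are compatible.
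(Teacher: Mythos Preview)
Your proposal is correct: the lemma is exactly the covariant transformation law for a symmetric $2$-tensor under the change \eqref{3.1}, and your computation of the two Jacobians and their substitution is accurate in every entry. The paper itself does not give a proof of this lemma; it is stated as a recalled fact, so your direct Jacobian calculation is the natural (and essentially only) way to establish it.
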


In the following we shall often abbreviate partial derivatives as follows%
\begin{equation*}
\partial _{0}\equiv \frac{\partial }{\partial x^{0}},\quad \partial
_{1}\equiv \frac{\partial }{\partial x^{1}},\quad \partial _{A}\equiv \frac{%
\partial }{\partial x^{A}},
\end{equation*}
\begin{equation*}
\underline{\partial _{0}}\equiv \frac{\partial }{\partial y^{0}},\quad
\underline{\partial _{i}}\equiv \frac{\partial }{\partial y^{i}}.
\end{equation*}
\bigskip

\section{Characteristic data}

\subsection{Basic (intrinsic) characteristic data}

The basic data on a characteristic cone for the Einstein equation is a
degenerate quadratic form. \textbf{\ }We will define this data as the
degenerate quadratic form $\tilde{C}$ induced by a given Lorentzian
spacetime metric $C$ which admits this cone as a null cone. {We denote as
before by }$C_{O}$ the manifold $x^{0}\equiv r-y^{0}=0$. If we take the $%
y^{i}$ as coordinates on $C_{O}$ it holds that, see (\ref{3.6}),
\begin{equation}
\tilde{C}\equiv \underline{\tilde{C}_{ij}}dy^{i}dy^{j}\text{ \ \ with \ \ }%
\underline{\tilde{C}_{ij}}\equiv \underline{\bar{C}_{ij}}+r^{-1}(y^{j}%
\underline{\bar{C}_{i0}}+y^{i}\underline{\bar{C}_{j0}})+r^{-2}y^{i}y^{j}%
\underline{\bar{C}_{00}}\;,  \label{4.1}
\end{equation}
where $\underline{\bar{C}_{\alpha \beta }}$ are the components in the $y$
coordinates of the trace $\bar{C}$ of $C$ on $C_{O}$ (not to be mistaken
with the induced quadratic form $\tilde{C})$. Assuming that $C_{O}$ is a
null cone for $C$ with generators $\underline{\bar{\ell}^{0}}=1$, $%
\underline{\bar{\ell}^{i}}=\frac{y^{i}}{r}$, (\ref{2.6}) implies that the
quadratic form $\tilde{C}$ is degenerate,
\begin{equation}
\frac{y^{i}}{r}\frac{y^{j}}{r}\underline{\tilde{C}_{ij}}\equiv \underline{%
\bar{C}_{00}}+2\frac{y^{i}}{r}\underline{\bar{C}_{0i}}+\frac{y^{i}}{r}\frac{%
y^{j}}{r}\underline{\bar{C}_{ij}}=0.  \label{4.2}
\end{equation}

\begin{lemma}
Two spacetime metrics $C$ and $C^{\prime }$ with components linked by
\begin{equation*}
\underline{\bar{C}_{ij}}^{\prime }:=\underline{\bar{C}_{ij}}%
+r^{-1}(a_{i}y_{j}+a_{j}y_{i})+r^{-2}\alpha y^{i}y^{j},\text{ \ }\underline{%
\bar{C}_{i0}}^{\prime }:=\underline{\bar{C}_{i0}}-a_{i},\text{ \ }\underline{%
\bar{C}_{00}}^{\prime }:=\underline{\bar{C}_{00}}-\alpha ,
\end{equation*}
with $a_{i}$ and $\alpha $\ arbitrary, induce on $C_{O}$ the same quadratic
form $\tilde{C}$, i.e. $\underline{\tilde{C}_{ij}}^{\prime }\equiv
\underline{\tilde{C}_{ij}}.$
\end{lemma}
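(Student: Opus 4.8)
The plan is to verify directly that the stated change of components leaves the induced quadratic form $\tilde{C}$ invariant, by substituting the primed components into formula (\ref{4.1}). Concretely, I would write
\begin{equation*}
\underline{\tilde{C}_{ij}}^{\prime }\equiv \underline{\bar{C}_{ij}}^{\prime }+r^{-1}(y^{j}\underline{\bar{C}_{i0}}^{\prime }+y^{i}\underline{\bar{C}_{j0}}^{\prime })+r^{-2}y^{i}y^{j}\underline{\bar{C}_{00}}^{\prime }
\end{equation*}
and replace each primed quantity by its definition in terms of the unprimed one plus the correction involving $a_i$ and $\alpha$.

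Collecting terms, the unprimed pieces reassemble into $\underline{\tilde{C}_{ij}}$, so it remains to check that the correction terms cancel. The $\alpha$-terms give $r^{-2}\alpha y^{i}y^{j}$ from the first summand and $-r^{-2}y^{i}y^{j}\alpha$ from the last, which cancel. The $a$-terms produce $r^{-1}(a_i y_j + a_j y_i)$ from the first summand — here one uses that the index is lowered with the Euclidean metric so $y_j = y^j$ — against $r^{-1}(-y^j a_i - y^i a_j)$ from the middle summand, and these cancel as well. Hence $\underline{\tilde{C}_{ij}}^{\prime }\equiv \underline{\tilde{C}_{ij}}$, which is the claim.

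There is essentially no obstacle here: the lemma is a bookkeeping identity expressing the gauge freedom already visible in the defining relation (\ref{4.1}) together with the degeneracy condition (\ref{4.2}). The only point requiring a word of care is the convention that spatial indices on $C_O$ are raised and lowered with $\delta_{ij}$, so that $y_i$ and $y^i$ may be identified; with that understood the computation is a two-line cancellation. One could alternatively phrase the proof structurally: formula (\ref{4.1}) shows that $\tilde{C}$ depends on the trace $\bar{C}$ only through the combination that is annihilated precisely by shifts of $\underline{\bar{C}_{00}}$ and $\underline{\bar{C}_{0i}}$ compensated by the displayed shift of $\underline{\bar{C}_{ij}}$, but the explicit substitution is the cleanest presentation.
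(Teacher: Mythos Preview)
Your proof is correct and is precisely the ``elementary calculation using the identity written above'' that the paper invokes; you have simply made the cancellation explicit. The only remark is that your caveat about $y_i=y^i$ is exactly the right one, since the lemma as stated mixes upper and lower spatial indices on $y$.
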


\begin{proof}
Elementary calculation using the identity written above.
\end{proof}

\bigskip

In what follows, to simplify computations we will make the
restrictive condition that
\begin{equation}
\underline{C_{0i}}=0,\text{ \ }\underline{C_{00}}=-1,\text{ \ i.e \ \ }%
C:=-(dy^{0})^{2}+\underline{C_{ij}}dy^{i}dy^{j}\;.  \label{4.3}
\end{equation}
The set $\{y^{0}=r\}$  {is then a null cone for the metric}
$C,$ {with generator }$\underline{\ell ^{0}}=1,$
$\underline{\ell ^{i}}=\frac{y^{i}}{r}, $  {if and only if}
\begin{equation}
 {y^{i}\underline{\bar{C}_{ij}}=y^{j}}  \label{4.4}
\end{equation}%
(compare~\cite{ChJezierskiCIVP}).

The general relation between components in coordinates $y$ and
adapted null coordinates $x^{1}$, $x^{A}$ gives
\begin{equation*}
\tilde{C}\equiv \tilde{C}_{AB}dx^{A}dx^{B}\text{ \ with \ }\tilde{C}%
_{AB}\equiv \bar{C}_{AB},\text{ \ \ }\tilde{C}_{1A}=\tilde{C}_{11}=0,
\end{equation*}%
with $\tilde{C}_{AB}$ the components of a $x^{1}\equiv r$--dependent
Riemannian metric on the sphere $S^{n-1}$
\begin{equation*}
-(\underline{C_{00}}+\frac{y^{i}}{r}\underline{C_{0i}})\equiv C_{01}=1,\text{
\ \ }C_{0A}\equiv -\frac{\partial y^{i}}{\partial x^{A}}\underline{C_{0i}}=0%
\text{ \ \ }\underline{C_{00}}\equiv C_{00}=-1.\;
\end{equation*}%
This metric $C$ is also such that
\begin{equation*}
C^{00}\equiv C^{0A}\equiv C^{1A}\equiv 0,\quad C^{01}\equiv C^{11}\equiv 1,
\end{equation*}%
while $\bar{C}^{AB}$\ are the elements of the inverse of the
positive definite quadratic form with components
$\bar{C}_{AB}$.

\subsection{Full characteristic data}

We have seen in I that the trace $\bar{g}$ of a Lorentzian metric $g$
satisfying the reduced Einstein equations is a solution of the full Einstein
equations if and only if it satisfies the wave \ map gauge constraints.
These constraints $\mathcal{C}_{\alpha }=0$ are deduced in vacuum from the
identity satisfied by the Einstein tensor $S:$%
\begin{equation*}
\bar{\ell}^{\beta }\bar{S}_{\alpha \beta }\equiv \mathcal{C}_{\alpha }+%
\mathcal{L}_{\alpha }
\end{equation*}
where $\mathcal{L}_{\alpha }$ is linear and homogeneous in the wave gauge
vector $H$ while $\mathcal{C}_{\alpha }$ depends only on $\bar{g}$ and its
derivatives among $C_{O}$ and the given target $\hat{g}$. Given $\tilde{g}$,
i.e. $\bar{g}_{AB}\equiv \bar{C}_{AB}$, $\bar{g}_{1A}=\bar{g}_{11}=0$, the
remaining components $\nu_{0}\equiv \bar{g}_{01}$, $\nu_{A}\equiv \bar{g}%
_{0A}$, $\bar{g}_{00}$ are determined by the constraints and limit
conditions at the vertex $O$ which can always be satisfied by choice of
coordinates (see I). The Cagnac-Dossa theorem applies to components in the $%
y $ coordinates. Lemma \ref{lemma3.1} gives
%
\begin{equation}
\underline{\bar{g}_{00}}\equiv \bar{g}_{00},\quad \underline{\bar{g}_{0i}}%
\equiv -(\bar{g}_{00}+\nu _{0})r^{-1}y^{i}-\underline{\nu _{i}},\quad \text{%
with}\quad \underline{\nu _{i}}:=\nu _{A}\frac{\partial x^{A}}{\partial y^{i}%
}\;,  \label{4.5}
\end{equation}
\begin{equation*}
\underline{\bar{g}_{ij}}=(\bar{g}_{00}+2\nu_{0})r^{-2}y^{i}y^{j}
+r^{-1}(y^{i}\underline{\nu_{j}}+y^{j}\underline{\nu_{i}}) +\bar{g}_{AB}%
\frac{\partial x^{A}}{\partial y^{i}}\frac{\partial x^{B}}{\partial y^{j}}
\;,
\end{equation*}
while, for the chosen metric $C$ and $\bar{g}_{AB}\equiv \bar{C}_{AB}$
\begin{equation*}
\underline{\bar{C}}_{ij}\equiv r^{-2}y^{i}y^{j}+\bar{g}_{AB}\frac{\partial
x^{A}}{\partial y^{i}}\frac{\partial x^{B}}{\partial y^{j}} \;.
\end{equation*}
Therefore
\begin{equation}
\underline{\bar{g}_{ij}}=\underline{\bar{C}_{ij}}+(\bar{g}_{00}+2\nu
_{0}-1)r^{-2}y^{i}y^{j}+r^{-1}(y^{i}\underline{\nu _{j}}+y^{j}\underline{\nu
_{j}}) \;.  \label{4.6}
\end{equation}

\section{Null second fundamental form}

We have defined in I the \emph{null second fundamental form} of $(C_{O},%
\tilde{g})$ as the tensor $\chi$ on $C_{O}$ defined by the Lie derivative%
\footnote{{\footnotesize Recall that in arbitrary coordinates $x^{I}$ the
Lie derivative reads
\begin{equation*}
(\mathcal{L}_{\ell}\tilde{C})_{HK}\equiv \ell^{I}\partial_{I}\tilde{C}_{HK} +%
\tilde{C}_{HI}\partial_{K}\ell^{I} +\tilde{C}_{KI}\partial_{H}\ell^{I}.
\end{equation*}
}} with respect to the vector $\ell$ of the degenerate quadratic form $%
\tilde{g}$, namely in the coordinates $x^{1},x^{A}:$
\begin{equation}
\chi_{AB}:=\frac{1}{2}(\mathcal{L}_{\ell}\tilde{g})_{AB}\equiv \frac{1}{2}%
\partial_{1}\bar{g}_{AB} ,  \label{5.1}
\end{equation}
\begin{equation}
\chi_{A1}:=\frac{1}{2}(\mathcal{L}_{\ell}\tilde{g})_{A1}=0, \qquad
\chi_{11}:=\frac{1}{2}(\mathcal{L}_{\ell}\tilde{g})_{11}=0.  \label{5.2}
\end{equation}

In view of the application of the Cagnac-Dossa theorem we look for smooth
extensions. We define a smooth spacetime vector field $L$, vanishing at $O$
and with trace colinear with $\ell =\frac{\partial }{\partial x^{1}}$ on $%
C_{O}$, by its components respectively in the $x^{\alpha}$ and $y^{\alpha}$
coordinates:
\begin{equation*}
L:=y^{\lambda}\frac{\partial}{\partial y^{\lambda}} \equiv x^{0}\frac{%
\partial }{\partial x^{0}} +x^{1}\frac{\partial }{\partial x^{1}}, \quad
\text{hence} \quad \bar{L} \equiv x^{1}\ell \equiv r\ell.
\end{equation*}

We assume that the metric $C$ is smooth in $U$, a neighbourhood of $O$ in $%
\mathbf{R}^{n+1}$, i.e. its components $\underline{C_{ij}}$ are of class $%
C^{m}$, with $m$ as large as necessary in the considered context, functions
of the $y^{\alpha }$. We define a symmetric $C^{m-1}$ 2-tensor $X$,
identically zero in the case where $C\equiv \eta $, the Minkowski metric,
by:
\begin{equation}
X:=\frac{1}{2}\mathcal{L}_{L}C-C.\;  \label{5.3}
\end{equation}
In $y$ coordinates one has\footnote{{\footnotesize Recall that we underline
components in the $y$ coordinates and overline restrictions to $C_{O}.$}},
using $\ y^{i}\underline{C_{ij}}=y^{j}$
\begin{equation}
\underline{X}_{00}\equiv \underline{X}_{0i}\equiv 0,\qquad \underline{X}%
_{ij}\equiv \frac{1}{2}\{y^{0}\underline{\partial _{0}C_{ij}}+y^{h}%
\underline{\partial _{h}C_{ij}}\}.  \label{5.4}
\end{equation}
with
\begin{equation*}
y^{i}\underline{\partial _{0}C_{ij}}=0,
\end{equation*}
and, using $\partial _{h}y^{i}=\delta _{h}^{i}$%
\begin{equation}
y^{i}y^{h}\underline{\partial _{h}C_{ij}}=y^{h}\underline{\partial _{h}}%
(y^{i}\underline{C_{ij}})-y^{h}\underline{C_{ij}}\underline{\partial _{h}}%
y^{i}=0,  \label{5.5}
\end{equation}
which imply
\begin{equation}
\underline{L^{i}X_{ij}}\equiv y^{i}\underline{X_{ij}}=0.  \label{5.6}
\end{equation}

In $x$ coordinates we find, using the values of the components $C_{0\alpha }$
and $C_{1\alpha }$ of the metric $C$, that the tensor $X$ obeys the key
properties
\begin{equation}
X_{\mu 0}=0,\qquad X_{\mu 1}=0,  \label{5.7}
\end{equation}
while
\begin{equation}
X_{AB}\equiv \frac{1}{2}(x^{0}\partial _{0}C_{AB}+x^{1}\partial
_{1}C_{AB})-C_{AB}.  \label{5.8}
\end{equation}
Hence $X_{AB}$ reduces on the null cone $C_{O}$ to
\begin{equation}
\bar{X}_{AB}\equiv \frac{1}{2}x^{1}\partial_{1}\bar{C}_{AB}-\bar{C}%
_{AB}\equiv x^{1}\chi_{AB}-\bar{g}_{AB}.  \label{5.9}
\end{equation}
\emph{We still denote by $X$ the mixed $C^{m-1}$ tensor on spacetime
obtained from $X$ by lifting an index with the contravariant associate of $%
C; $} its $y$ components are the $C^{m-1}$ functions
\begin{equation*}
2\underline{X_{\alpha }^{\gamma }}\equiv \underline{C}^{\gamma \beta }\{y^{0}%
\underline{\partial _{0}C_{\alpha \beta }}+y^{i}\underline{\partial
_{i}C_{\alpha \beta }}\},
\end{equation*}
hence, $C$ being given by (\ref{4.3}),
\begin{equation}
\underline{X_{i}^{j}}\equiv \frac{1}{2}\underline{C^{jh}}\{y^{0}\underline{%
\partial _{0}C_{ih}}+y^{k}\underline{\partial _{k}C_{ih}}\},\quad \underline{%
X_{i}^{0}}\equiv \underline{X_{0}^{j}}\equiv \underline{X_{0}^{0}}\equiv 0,
\label{5.10}
\end{equation}
and
\begin{equation}
\underline{X_{i}^{j}L_{j}}\equiv 0.  \label{5.11}
\end{equation}
where the index of $L$ has been lowered with the metric $C$, so that this is
equivalent to (5.6). In $x$ coordinates $X_{A}^{C}$ are the only non
vanishing components of $X$. Their traces on $C_{O}$ are
\begin{equation*}
\bar{X}_{A}^{C}\equiv \frac{1}{2}x^{1}\bar{g}^{BC}\partial_{1}\bar{g}%
_{AB}-\delta_{A}^{C}\equiv x^{1}\chi_{A}^{C}-\delta_{A}^{C},
\end{equation*}
hence
\begin{equation}
\chi_{A}^{C}:=\frac{1}{2}\bar{g}^{BC}\partial _{1}\overline{g_{AB}}=\frac{1}{%
x^{1}}(\bar{X}_{A}^{C}+\delta _{A}^{C}),  \label{5.14}
\end{equation}
and
\begin{equation}
\tau :=\frac{1}{2}\bar{g}^{AB}\partial _{1}\overline{g_{AB}}=\frac{\overline{%
\mathrm{tr}X}}{x^{1}}+\frac{n-1}{x^{1}}.  \label{5.15}
\end{equation}
The trace of the tensor $X$ is the $C^{m-1}$ function
\begin{equation}
\mathrm{tr}X\equiv \underline{X_{\alpha }^{\alpha }}\equiv X_{\lambda
}^{\lambda }\equiv C^{AB}X_{AB}.  \label{5.12}
\end{equation}
On the light cone $C_{O}$ it holds that
\begin{equation}
\overline{\mathrm{tr}X}\equiv \underline{\bar{X}_{i}^{i}}\equiv \bar{g}^{AB}%
\bar{X}_{AB}\equiv \frac{x^{1}}{2}\bar{g}^{AB}\partial _{1}\bar{g}%
_{AB}-(n-1),  \label{5.13}
\end{equation}
\begin{equation}
|\chi|^{2}:=\chi_{A}^{C}\chi_{C}^{A}\equiv \frac{1}{(x^{1})^{2}}\{\bar{X}%
_{\beta }^{\alpha }\bar{X}_{\alpha }^{\beta }+2\overline{\mathrm{tr}X}+n-1\}.
\end{equation}

\section{A criterium: admissible series}

To show that the integration of the constraints, which appear as ODE in $%
x^{1}$, leads to traces on the cone of smooth spacetime
functions we shall use the following lemma, introduced by
Cagnac {(unpublished)} for formal series, but used here for
real analytic functions, a special class $C^{\omega }$ of
$C^{\infty }$ functions.

\begin{lemma}
A function is the trace $\bar{f}$ on $C_{O}^{T}$ of a spacetime function $f$
analytic in $U\cap Y_{O}^{T}$, $U$ a neighbourhood of $O$, if and only if it
admits on $U\cap C_{O}^{T}$ a convergent expansion of the form
\begin{equation}
\bar{f}\equiv f_{0}+\sum_{p=1}^{\infty }\bar{f}_{p}r^{p}  \label{6.1}
\end{equation}
with
\begin{equation}
\bar{f}_{p}\equiv \bar{f}_{p,i_{1}...i_{p}}\Theta^{i_{1}}...\Theta^{i_{p}}+%
\bar{f}^{\prime }{}_{p,i_{1}...i_{p-1}}\Theta^{i_{1}}...\Theta^{i_{p-1}}
\label{6.2}
\end{equation}
where $f_{0}$, $f_{p,i_{1}...i_{p}}$ and
$f_{p,i_{1}...i_{p-1}}^{\prime}$ are numbers. Such a series is
called an \emph{admissible series}. A coefficient $f_{p}$ of
the form (\ref{6.2}) is called an \emph{admissible coefficient
of order $p$}.
\end{lemma}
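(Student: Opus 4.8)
The plan is to prove both directions by exploiting the correspondence between the power series of a spacetime analytic function $f$ around the origin and the expansion of its trace along the cone $y^0 = r$. First I would set up the easy direction. Suppose $f$ is analytic in $U \cap Y_O^T$ and extends analytically across a neighbourhood of $O$ (or, more carefully, work with the fact that the trace on $C_O$ is determined by boundary values; but the cleanest route is to use that $f$ has a genuine convergent Taylor expansion $f \equiv \sum_{k\geq 0} \sum_{|\alpha|=k} c_\alpha\, y^\alpha$ near $O$). Restricting to the cone means substituting $y^0 = r$, $y^i = r\Theta^i$ with $\sum (\Theta^i)^2 = 1$. A monomial $(y^0)^{a}(y^1)^{b_1}\cdots (y^n)^{b_n}$ of total degree $k = a + |b|$ then becomes $r^{k}\,\Theta^{b_1}\cdots\Theta^{b_n}$, which is $r^k$ times a degree-$k$ monomial in the $\Theta^i$. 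Collecting all monomials of total degree $p$ yields a term $r^p$ times a homogeneous degree-$p$ polynomial in $\Theta$. Using the constraint $\sum(\Theta^i)^2 = 1$ repeatedly, any homogeneous degree-$p$ polynomial in the $\Theta^i$ can be reduced to a sum of a degree-$p$ part and a degree-$(p-1)$ part — i.e. exactly the form (\ref{6.2}). Convergence of the resulting series in $r$ on $U \cap C_O^T$ follows from convergence of the Taylor series together with $|\Theta^i|\leq 1$.

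For the converse — the substantive direction — I would start from an admissible series $\bar f \equiv f_0 + \sum_{p\geq 1} \bar f_p r^p$ with $\bar f_p$ of the form (\ref{6.2}) and build a spacetime analytic function $f$ with this trace. The natural candidate is obtained by replacing each $\Theta^{i}$ by $y^{i}/r$ and each power $r^p$ by a homogeneous-degree-$p$ polynomial in $y$ that restricts correctly. Concretely, $r^p \Theta^{i_1}\cdots\Theta^{i_p}$ should be lifted to the monomial $y^{i_1}\cdots y^{i_p}$, which on the cone (where $r = y^0$) equals $r^p \Theta^{i_1}\cdots\Theta^{i_p}$; and the "primed" term $r^p \Theta^{i_1}\cdots\Theta^{i_{p-1}}$ should be lifted to $y^0\, y^{i_1}\cdots y^{i_{p-1}}$, again of total degree $p$ and restricting correctly since $y^0 = r$ on $C_O$. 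Summing these lifts gives a formal power series $f \equiv f_0 + \sum_{p\geq 1}\big(\bar f_{p,i_1\ldots i_p} y^{i_1}\cdots y^{i_p} + \bar f'_{p,i_1\ldots i_{p-1}}\, y^0\, y^{i_1}\cdots y^{i_{p-1}}\big)$ whose trace on $C_O$ is visibly $\bar f$ by construction.

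The main obstacle — and the step I would spend the most care on — is showing that this formal lift actually converges in a full neighbourhood of $O$ in $\mathbf{R}^{n+1}$, not merely that it is formally correct. The issue is that the admissible representation (\ref{6.2}) of each coefficient $\bar f_p$ is not unique (because of the relation $\sum(\Theta^i)^2 = 1$), so one must choose the lift from a normalised representative and control its coefficients. The way I would handle this is: first observe that the coefficients $f_{p,i_1\ldots i_p}$, $f'_{p,i_1\ldots i_{p-1}}$ can be taken to be the genuine Fourier/spherical-harmonic-type coefficients extracted from $\bar f_p$ by integrating against the appropriate functions on $S^{n-1}$, which are then bounded by $\sup_{S^{n-1}}|\bar f_p|$ up to a combinatorial constant depending on $n$ and $p$ that grows at most polynomially (or is absorbed by a geometric factor); then convergence of $\sum \bar f_p r^p$ on a fixed interval $0 \leq r \leq \rho$, say with $\sup_{S^{n-1}}|\bar f_p| \leq M\rho_0^{-p}$, forces the lifted series to converge for $|y| < \rho_0/(C\sqrt{n+1})$ or similar, since each lifted monomial $y^{i_1}\cdots y^{i_p}$ (or with a $y^0$ factor) is bounded by $|y|^p$ and there are at most $\binom{n+p-1}{p}$-many of them, a factor that only shifts the radius of convergence. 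Finally I would note that the resulting $f$ is analytic, hence smooth, in that neighbourhood, and in particular analytic in $U \cap Y_O^T$, completing the equivalence; the role of $f_0$ being a single number (rather than a general $\bar f_0$) is exactly consistent with continuity of $f$ at $O$, where all the monomials of positive degree vanish.
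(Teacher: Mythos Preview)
Your approach matches the paper's: both directions are the substitution $y^{0}\leftrightarrow r$, $y^{i}\leftrightarrow r\Theta^{i}$ between the Taylor series of $f$ at $O$ and the form (\ref{6.1})--(\ref{6.2}); the paper's entire proof is essentially that one sentence, with ``and conversely'' covering the other direction.

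One correction in your forward direction: $(y^{0})^{a}\prod_{i}(y^{i})^{b_{i}}$ restricts to $r^{a+|b|}$ times a $\Theta$-monomial of degree $|b|$, not $a+|b|$; hence the coefficient of $r^{p}$ is a polynomial in $\Theta$ of degree at most $p$ (with homogeneous parts of every degree $0,\ldots,p$), not a homogeneous degree-$p$ polynomial. It is this inhomogeneous object that one rewrites, via $\sum(\Theta^{i})^{2}=1$, as a sum of homogeneous pieces of degrees $p$ and $p-1$, by multiplying a degree-$q$ part by $(\sum(\Theta^{i})^{2})^{\lfloor (p-q)/2\rfloor}$. Your reduction step is the right idea but is applied to the wrong starting object as you have written it.

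Your convergence discussion for the converse goes well beyond what the paper supplies. The lift you write down is the correct one; once ``convergent admissible series'' is read as coming with specific coefficient arrays $f_{p,i_{1}\ldots i_{p}}$, $f'_{p,i_{1}\ldots i_{p-1}}$ satisfying a geometric bound, convergence of the lifted Taylor series in a full neighbourhood of $O$ follows from the same majorant, so the spherical-harmonic extraction argument is unnecessary (and, as sketched, not fully justified).
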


\begin{proof}
If $f$ is analytic it admits an expansion in Taylor series
\begin{equation}
f\equiv \sum_{p=0}^{\infty}
f_{\alpha_{1}...\alpha_{p}}y^{\alpha_{1}}...y^{\alpha_{p}}, \quad
f_{\alpha_{1}...\alpha_{p}}:=\frac{1}{p!} \frac{\partial^{p}f}{\partial
y^{\alpha_{1}}...\partial y^{\alpha_{p}}}(O).  \label{6.3}
\end{equation}
One goes from the formulas (\ref{6.3}) to (\ref{6.1}, \ref{6.2}) by
replacing $y^{i}$ by $r\Theta^{i}$ and $y^{0}$ by $r$, and conversely, in $%
\Omega \cap C_{O}$ or in $\Omega $.
\end{proof}

\begin{remark}
The identity (\ref{6.1}) is equivalent to saying that $\bar{f}$ is of the
form $\bar{f}=f_{1}+rf_{2}$, with $f_{1}$ and $f_{2}$ analytic functions of $%
y^{i}$.
\end{remark}

We say that an admissible series is of \emph{minimal order} $q$ if the
coefficients $f_{p}$ are identically zero for $p<q$.

\begin{proposition}
If the metric $C$ is analytic and satisfies the conditions (\ref{4.3}), (\ref%
{4.4}) then the functions $\overline{\mathrm{tr}X}$ and $|X|^{2}$ are
admissible series of minimal orders respectively 2 and 4.
\end{proposition}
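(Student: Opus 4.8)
The plan is to work directly from the explicit formulas for $X$ in $y$-coordinates derived in Section~5 and to feed them through the characterization of admissible series in Lemma~6.1. Recall that $\overline{\mathrm{tr}X}$ and $|X|^{2}$ are, by (\ref{5.12})--(\ref{5.14}) and the identity after (\ref{5.13}), built from the mixed components $\underline{X_{i}^{j}}$, whose $y$-coordinate expressions are given in (\ref{5.10}). Since $C$ is analytic near $O$ and $\underline{C_{ij}}(O)=\delta_{ij}$ by (\ref{4.3}), each $\underline{C_{ij}}$ has a Taylor expansion $\delta_{ij}+\sum_{p\geq1}c_{ij,k_{1}\dots k_{p}}y^{k_{1}}\cdots y^{k_{p}}$, hence so does its inverse $\underline{C^{jh}}$, with the order-zero term again $\delta^{jh}$. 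The operator appearing in (\ref{5.10}) is the dilation field $L=y^{0}\partial_{0}+y^{k}\partial_{k}$; acting on a monomial of total degree $d$ in $(y^{0},y^{k})$ it multiplies it by $d$. Therefore $\tfrac12\{y^{0}\underline{\partial_{0}C_{ih}}+y^{k}\underline{\partial_{k}C_{ih}}\}$ kills the constant term $\delta_{ih}$ and, on the degree-$p$ part of $\underline{C_{ih}}$, acts as multiplication by $p/2\geq 1/2$; in particular it vanishes to order at least $1$ at $O$, so $\underline{X_{i}^{j}}$ is an analytic function of the $y^{\alpha}$ vanishing at $O$. By Lemma~6.1 (applied componentwise and then closed under products and sums, which preserve admissibility) every polynomial expression in the $\underline{X_{i}^{j}}$ is an admissible series, and one that vanishes at $O$ has minimal order at least $1$.

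The heart of the matter is the precise minimal order, where the algebraic constraint (\ref{4.4}), i.e. $y^{i}\underline{C_{ij}}=y^{j}$, does the work — exactly as it already did in (\ref{5.5})--(\ref{5.6}) to produce $y^{i}\underline{X_{ij}}=0$ and in (\ref{5.11}) for the mixed tensor. The key observation is that $\underline{X_{i}^{j}}$, although of order $\geq1$ as a function, has the special structure $y^{i}\underline{X_{i}^{j}}\equiv0$; combined with $\underline{X_{i}^{0}}\equiv\underline{X_{0}^{j}}\equiv\underline{X_{0}^{0}}\equiv0$ this forces the would-be order-$1$ contribution to $\overline{\mathrm{tr}X}=\underline{\bar X_{i}^{i}}$ to be more constrained than a generic analytic function vanishing at $O$. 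Concretely I would expand $\underline{X_{i}^{j}}=\underline{X_{i}^{j}}{}^{(1)}+\underline{X_{i}^{j}}{}^{(2)}+\dots$ in homogeneous degree, show $y^{i}\underline{X_{i}^{j}}{}^{(1)}\equiv0$, and then compute $\overline{\mathrm{tr}X}^{(1)}=\underline{X_{i}^{i}}{}^{(1)}|_{y^{0}=r}$. The linear term in $\underline{C_{ih}}$ is $c_{ih,k}y^{k}$; the constraint $y^{i}\underline{C_{ij}}=y^{j}$ at first order gives $y^{i}c_{ij,k}y^{k}=0$, i.e. $c_{ij,k}+c_{kj,i}=0$ after symmetrization — but $c_{ih,k}$ is also symmetric in $(i,h)$ since $\underline{C_{ih}}$ is. A short index chase (symmetric in two pairs, antisymmetric under swapping across the pairs) forces $c_{ih,k}\equiv0$, so in fact $\underline{C_{ih}}=\delta_{ih}+O(|y|^{2})$, whence $\underline{X_{i}^{j}}=O(|y|^{2})$ and $\overline{\mathrm{tr}X}$ has minimal order $\geq2$; that the order is exactly $2$ (not higher) comes from noting that a generic analytic $C$ satisfying (\ref{4.3})--(\ref{4.4}) has a nonzero quadratic part in $\underline{C_{ij}}$ not annihilated by $\tfrac12 L$ acting as multiplication by $1$, and whose contracted trace on $\{y^{0}=r\}$ does not vanish identically. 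For $|X|^{2}$, write it (see the line after (\ref{5.13})) as $\tfrac1{(x^{1})^{2}}\{\bar X_{\beta}^{\alpha}\bar X_{\alpha}^{\beta}+2\overline{\mathrm{tr}X}+n-1\}$; since on $C_{O}$ one has $\bar X_{A}^{C}=x^{1}\chi_{A}^{C}-\delta_{A}^{C}$, a direct computation gives $|X|^{2}=|\chi|^{2}-\tfrac{2}{x^{1}}\tau+\tfrac{n-1}{(x^{1})^{2}}$, but more usefully $|X|^{2}$ as a spacetime function equals $\underline{X_{i}^{j}}\,\underline{X_{j}^{i}}$, a product of two quantities each of minimal order $2$, hence of minimal order $\geq4$; genericity of the quadratic part of $C$ again shows it is exactly $4$.

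The main obstacle I anticipate is the index manipulation establishing that the first-order part of $\underline{C_{ij}}$ is forced to vanish by the joint symmetry (in $i\leftrightarrow h$) and the constraint-induced relation $c_{(ij),k}y^{i}y^{k}=0$: one must be careful that the constraint only gives information about the part of $c_{ih,k}$ symmetric in $(i,k)$, and then argue that a $3$-index tensor symmetric in $(i,h)$ and in $(i,k)$ is totally symmetric, while the constraint kills its totally symmetric part — hence $c_{ih,k}\equiv0$. A secondary subtlety is bookkeeping the passage from the $y$-coordinate minimal order of a spacetime function to the minimal order of its admissible series on $C_{O}$: by the Remark after Lemma~6.1 the trace is of the form $f_{1}+rf_{2}$ with $f_{1},f_{2}$ analytic in $y^{i}$, and one must check that vanishing to order $q$ at $O$ of the spacetime function forces both $f_{1}$ and $rf_{2}$ to start at order $q$ — this is immediate from (\ref{6.3}) since substituting $y^{0}=r$ does not lower the degree. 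Everything else is the routine algebra the proposition promises we may skip.
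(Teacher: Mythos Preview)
The paper does not actually prove this proposition: it is stated without argument, and the nearest substitute is the closing paragraph of Section~6, where the paper simply \emph{assumes} going forward that $\underline{\bar c_{ij}}$ has admissible expansion of minimal order~2 and $\overline{\underline{\partial_{0}c_{ih}}}$ of minimal order~1, and then reads off from (5.10) that $\underline{\bar X_{i}^{j}}$ has minimal order~2. Your proposal goes further: you \emph{derive} the vanishing of the first-order Taylor part of $\underline{C_{ih}}-\delta_{ih}$ from the algebraic constraint~(4.4) together with the symmetry of $C_{ih}$, via the standard ``symmetric in $(i,h)$, antisymmetric in $(i,k)$'' index chase. That is a genuine argument the paper omits, and it is what is actually needed for the $|X|^{2}$ claim (mere tracelessness of the linear part would suffice for $\overline{\mathrm{tr}X}$ alone, but not for $|X|^{2}$).

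Two small points to tighten. First, you wrote the linear part of $\underline{C_{ih}}$ as $c_{ih,k}y^{k}$ and forgot the time piece $c_{ih,0}\,y^{0}$; on the cone this contributes $c_{ih,0}\,r$ to the order-1 admissible coefficient, and one must also show $c_{ih,0}=0$. This follows immediately: the order-2 part of $y^{i}\bar c_{ij}=0$ reads $c_{ij,0}\,\Theta^{i}+c_{ij,k}\,\Theta^{i}\Theta^{k}=0$ on $S^{n-1}$, and the two terms have opposite parity under $\Theta\mapsto-\Theta$, so each vanishes separately; then $c_{ij,0}\Theta^{i}=0$ for all $\Theta$ gives $c_{ij,0}=0$. (If, as the paper tacitly does in (5.5)--(5.6) and (5.11), one treats $y^{i}\underline{C_{ij}}=y^{j}$ as an identity off the cone, the $y^{0}$- and $y^{k}$-pieces decouple directly.) Second, your worry about ``exactly 2'' versus ``at least 2'' is unnecessary: the paper's definition of minimal order $q$ only requires $f_{p}\equiv0$ for $p<q$, so the lower bounds you obtain are precisely what is claimed.
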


The following lemmas will be very useful when integrating the constraints.

\begin{lemma}
If $f_{p}$ and $h_{q}$ are admissible coefficients of order respectively $p$
and $q$, then $f_{p}+h_{p}$ and $f_{p}h_{q}$ are admissible coefficients of
order respectively $p$ and $p+q$.
\end{lemma}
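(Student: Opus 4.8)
The plan is to work directly with the defining expression~(\ref{6.2}): an admissible coefficient of order $p$ is, as a function on the sphere $S^{n-1}$, the sum of a homogeneous polynomial of degree $p$ in the direction cosines $\Theta^i$ and a homogeneous polynomial of degree $p-1$ in those same variables. I write $f_p = P_p + P_{p-1}$ and $h_q = Q_q + Q_{q-1}$, a subscript denoting the degree of homogeneity in the $\Theta^i$. For the sum, observe that $f_p + h_p = (P_p + Q_p) + (P_{p-1} + Q_{p-1})$; since a sum of two homogeneous polynomials of equal degree is again homogeneous of that degree, the right-hand side is of the form~(\ref{6.2}) with order $p$, which settles the first claim immediately.

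For the product I would expand
\begin{equation*}
f_p h_q = P_p Q_q + \big( P_p Q_{q-1} + P_{p-1} Q_q \big) + P_{p-1} Q_{q-1},
\end{equation*}
and record the degrees: the first term is homogeneous of degree $p+q$, the middle bracket is homogeneous of degree $p+q-1$ (again a sum of two homogeneous polynomials of equal degree), while the last term is homogeneous of degree $p+q-2$. The target form~(\ref{6.2}) for order $p+q$ admits only degrees $p+q$ and $p+q-1$, so the term $P_{p-1}Q_{q-1}$ of degree $p+q-2$ is the obstruction, and absorbing it is the one nontrivial step.

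The resolution is to exploit the sphere constraint $\sum_{i=1}^n (\Theta^i)^2 = 1$ recorded after~(\ref{3.1}): multiplying any polynomial by this quantity leaves its value on $S^{n-1}$ unchanged but raises its degree of homogeneity by $2$. Hence on the cone
\begin{equation*}
P_{p-1} Q_{q-1} \equiv \Big( \sum_{i=1}^n (\Theta^i)^2 \Big) P_{p-1} Q_{q-1},
\end{equation*}
and the right-hand side is homogeneous of degree $p+q$. Folding this into the leading term, I would rewrite
\begin{equation*}
f_p h_q \equiv \Big[ P_p Q_q + \big( \textstyle\sum_{i} (\Theta^i)^2 \big) P_{p-1} Q_{q-1} \Big] + \big( P_p Q_{q-1} + P_{p-1} Q_q \big),
\end{equation*}
which now displays $f_p h_q$ as a homogeneous polynomial of degree $p+q$ plus one of degree $p+q-1$, i.e.\ exactly the form~(\ref{6.2}) of an admissible coefficient of order $p+q$. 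The only point requiring care is the parity bookkeeping: the constraint shifts degree by two, so it maps the leftover degree $p+q-2$ piece precisely onto the degree $p+q$ slot and not onto the intermediate one, which is what makes the two surviving pieces fit the admissible pattern.
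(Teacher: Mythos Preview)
Your proof is correct and follows essentially the same idea as the paper's: the paper's one-line proof (``replace $r\Theta^i$ by $y^i$ and $r^2$ by $\sum_i (y^i)^2$'') is exactly your use of the sphere identity $\sum_i (\Theta^i)^2 = 1$ to raise the degree of the leftover $P_{p-1}Q_{q-1}$ term by two, just phrased in the $y$-coordinates rather than in the $\Theta$-variables. You have simply made explicit the ``elementary computation'' the paper alludes to.
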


\begin{proof}
Elementary computation of ($f_{p}+h_{p})r^{p}$ and $f_{p}h_{q}r^{p+q}$
replacing $r\Theta ^{i}$ by $y^{i}$ and $r^{2}$ by $\Sigma _{i}(y^{i})^{2}$.
\end{proof}

\medskip

Suppose that $f$ and $h$ are admissible series of minimal
orders $q_{f}$ and $q_{h}$. The following are easy-to-check
consequences of the lemma:

\begin{itemize}
\item 1) $fh$ is an admissible series of minimal order $q_{f}+q_{h}$;

\item 2) if $q_{f}=q_{h}$ then $f+h$ is an admissible series of the same
minimal order;

\item 3) if $f(0)\not=0$ and $q_{f}=0$ then $1/f$ is an admissible series
also minimal order 0;

\item 4) $r\partial _{1}f$ is an admissible series of minimal order $q_{f}$,
unless $q_{f}=0$ and then it has a larger minimal order.
\end{itemize}

\begin{lemma}
\label{lemma6.5} If $k$ and $h$ are admissible series with $h$ of minimal
order $q_{h}\geq 1$ and the constant $k_{0}\equiv k(0)\geq 0$ then the ODE
\begin{equation}
r\partial _{1}f+kf=h  \label{6.4}
\end{equation}
admits one and only one solution $f$ which is also an admissible series of
the same minimal order $q_{h}$ as $h$. The result extends to $q_{h}=0$ if $%
k_{0}>0$.
\end{lemma}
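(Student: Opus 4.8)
The plan is to solve (\ref{6.4}) first as a formal admissible series — this gives uniqueness (within admissible series of minimal order $\ge q_h$) and the asserted minimal order at once — and then to prove that this formal series converges by exhibiting the solution through the classical integrating factor and checking, in the $y^i$-variables, that the resulting expression is the trace of an analytic spacetime function; Cagnac's lemma then completes the argument.

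For the formal part, write $f=f_0+\sum_{p\ge1}f_pr^p$, $k=k_0+\sum_{m\ge1}k_mr^m$, $h=\sum_{p\ge q_h}h_pr^p$ with admissible coefficients of the indicated orders. Since $r\partial_1r^p=pr^p$, equating the coefficient of $r^p$ in (\ref{6.4}) gives
\begin{equation*}
(p+k_0)\,f_p=h_p-\sum_{m=1}^{p}k_mf_{p-m}.
\end{equation*}
Because $k_0\ge0$, the factor $p+k_0$ is positive for all $p\ge1$, and also for $p=0$ when $k_0>0$; hence $f_p$ is uniquely determined by $f_0,\dots,f_{p-1}$, and by the lemma on sums and products of admissible coefficients a trivial induction shows each $f_p$ is an admissible coefficient of order $p$. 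If $q_h\ge1$ the $p=0$ equation is $k_0f_0=0$: this forces $f_0=0$ when $k_0>0$, and $f_0=0$ is the only choice compatible with minimal order $\ge q_h$ when $k_0=0$; then the recursion yields $f_p=0$ for $p<q_h$ and $f_{q_h}=h_{q_h}/(q_h+k_0)$, so $f$ has minimal order exactly $q_h$. If $q_h=0$ (so $k_0>0$) one gets $f_0=h_0/k_0$. Uniqueness within admissible series of minimal order $\ge q_h$ is immediate, since the difference of two solutions solves the homogeneous equation and the recursion then kills all its coefficients.

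Proving that this series converges — equivalently, that the solution is genuinely the trace of an analytic spacetime function — is the main obstacle. I would write the classical solution of the linear ODE (\ref{6.4}) as $f=\mu^{-1}r^{-k_0}\int_0^r s^{k_0-1}\mu(s)h(s)\,ds$ with $\mu:=\exp(\int_0^r s^{-1}(k(s)-k_0)\,ds)$, the constant of integration being fixed to zero by the behaviour at the vertex ($s^{k_0-1}h$ being integrable at $s=0$ precisely because $q_h\ge1$ when $k_0=0$, and $k_0>0$ otherwise), and then verify that this is an admissible series. Using the Remark, write $k=k_1(y^i)+rk_2(y^i)$ and $h=h_1(y^i)+rh_2(y^i)$ with $k_i,h_i$ analytic and $k_1(0)=k_0$, and recall that $r^2=\sum_i(y^i)^2$ is a polynomial in the $y^i$. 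Integrating the Taylor expansions of $k_1-k_1(0)$ and $k_2$ term by term shows $\int_0^r s^{-1}(k-k_0)\,ds$ is of the form $(\text{analytic in }y^i)+r\,(\text{analytic in }y^i)$, vanishing at $O$; then expanding the exponential and splitting it into even and odd powers of $r$, using $r^2=\sum_i(y^i)^2$ to turn the even powers into polynomials, shows the same for $\mu$ and for $\mu^{-1}$ (both equal to $1$ at $O$). Likewise, term-by-term integration of $\int_0^r s^{k_0-1}\mu(s)h(s)\,ds$ — in which the monomial $s^{k_0-1+|\alpha|}$ integrates to $r^{k_0+|\alpha|}/(k_0+|\alpha|)$, with denominators $\ge k_0+q_h>0$ — produces $r^{k_0}$ times an admissible series of minimal order $q_h$. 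Multiplying out, the powers $r^{\pm k_0}$ cancel, $f$ is an admissible series of minimal order $q_h$, and by Cagnac's lemma it is the trace on $C_O^T$ of an analytic spacetime function — hence, by the formal uniqueness above, the very series built in the second step. The delicate point is exactly this bookkeeping through the non-polynomial operations $\exp$, $\int_0^r$ and the possibly non-integer power $r^{k_0}$; it works only because one passes systematically to the $y^i$-variables and exploits $r^2=\sum_i(y^i)^2$. (Alternatively, convergence follows from a standard majorant estimate after substituting $f=f_1(y^i)+rf_2(y^i)$ into (\ref{6.4}), which recasts it as a coupled first-order system in the Euler operator $y^i\underline{\partial_i}$ with analytic data.)
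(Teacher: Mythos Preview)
Your formal recursion is identical to the paper's: both of you identify coefficients in $r\partial_1 f+kf=h$, obtain $(p+k_0)f_p=h_p-\sum_{q=1}^{p-1}k_qf_{p-q}$, and read off uniqueness, admissibility of each $f_p$, and the minimal order $q_h$.

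For convergence the routes diverge. The paper does not use the integrating factor at all; it simply assumes $|k_p|<c^p$ and $|h_p|<c^p$ (uniformly in the angular variables, since $k$ and $h$ are admissible and the sphere is compact) and proves by the same recursion that $|f_p|<c^p$, using $p/(p+k_0)\le1$. This is a three-line majorant argument. Your approach --- writing the solution as $\mu^{-1}r^{-k_0}\int_0^r s^{k_0-1}\mu h\,ds$, decomposing $\mu h=A(y)+rB(y)$, integrating $s^{k_0-1+|\alpha|}$ term by term along each ray, and reassembling an admissible series after the $r^{\pm k_0}$ cancel --- is correct and has the merit of producing an explicit closed formula for $f$, but it costs substantially more bookkeeping (closure of admissible series under $\exp$ and under the ray integral, handling the possibly non-integer $k_0$). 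The majorant alternative you mention in your last parenthesis is in fact exactly what the paper does, and is by far the shorter road.
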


\begin{proof}
Expand
\begin{equation}
f=\sum_{p=0}^{\infty }f_{p}r^{p},\qquad k=\sum_{p=0}^{\infty
}k_{p}r^{p},\qquad h=\sum_{p=q_{h}}^{\infty }h_{p}r^{p},
\end{equation}
hence
\begin{equation*}
r\partial _{1}f=\sum_{p=1}^{\infty }pf_{p}r^{p},
\end{equation*}
plug into the ODE (\ref{6.4}) and proceed to identifications.

We obtain by equating to zero the constant term
\begin{equation}
k_{0}f_{0}=h_{0},
\end{equation}
a relation which can be satisfied when $h_{0}\not=0$ only when
$k_{0}\not=0$.
We first consider the case where $h_{0}=0$, i.e. $q_{h}\geq 1$, and take $%
f_{0}=0$. We get the successive equalities
\begin{equation}
f_{1}+k_{0}f_{1}=h_{1},\quad \text{i.e.}\quad f_{1}=\frac{h_{1}}{1+k_{0}},
\end{equation}
and the recurrence relation, using $f_{0}=0$,
\begin{equation}
(p+k_{0})f_{p}+\sum_{q=1}^{p-1}k_{q}f_{p-q}=h_{p}\;.
\end{equation}
For $p<q_{h}$ we have $h_{p}=0$ and the recurrence relation gives $f_{p}=0$.
Therefore the leading admissible coefficients of $f$ and $h$ are always
related by
\begin{equation}
f_{q_{h}}=\frac{h_{q_{h}}}{q_{h}+k_{0}}.  \label{leading}
\end{equation}

We assume the series for $k$ and $h$ converge for all directions $\Theta^{i}$
and radius $cr<1$; that is, we assume that there exists a constant $c$ such
that
\begin{equation}
|k_{p}|<c^{p},\qquad \text{and}\qquad |h_{p}|<c^{p},
\end{equation}
Since $k_{0}\geq 0$ we have
\begin{equation*}
|f_{1}|\leq |\frac{h_{1}}{1+k_{0}}| < \frac{c}{1+k_{0}}\leq c .
\end{equation*}
Assume now that
\begin{equation}
|f_{p}|<c^{p}\qquad \text{for}\qquad p<p_{0},
\end{equation}
then from the iteration we get, for larger values of $p$, the inequality
\begin{equation}
|f_{p}|<\frac{p}{p+k_{0}}c^{p}\leq c^{p}\;.
\end{equation}
The bounds on $|f_{p}|$ show that the series for $f$ also converges. It is
an admissible series of minimal order $q_{f}=q_{h}$.

When $q_{h}=0$, i.e. $h_{0}\not=0$ and $k_{0}\not=0$ we take
\begin{equation*}
f_{0}=\frac{h_{0}}{k_{0}}
\end{equation*}
and we set
\begin{equation*}
F=f-f_{0}.
\end{equation*}
It satisfies the equation
\begin{equation}
r\partial_{1}F+kF=H, \qquad \text{with} \qquad H:=h-kf_{0} .
\end{equation}
We have
\begin{equation*}
H_{0}=0
\end{equation*}
and we apply to $F$ the previous result.
\end{proof}

\begin{corollary}
\label{corollary6.6} If $f$ and $h$ are admissible series related by (\ref%
{6.4}) and $p+k_{0}>0$, and $r^{-p}h$ is an admissible series then $r^{-p}f$
is an admissible series of the same minimal order.
\end{corollary}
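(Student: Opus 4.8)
The natural strategy is to reduce the statement to Lemma~\ref{lemma6.5} by the change of unknown $f=(x^{1})^{p}F=r^{p}F$. Since $r=x^{1}$ and $\partial_{1}r=1$, one computes
\begin{equation*}
r\partial_{1}f=r\partial_{1}(r^{p}F)=pr^{p}F+r^{p+1}\partial_{1}F=r^{p}\left(r\partial_{1}F+pF\right),
\end{equation*}
so that, dividing (\ref{6.4}) by $r^{p}$, the function $F$ solves
\begin{equation*}
r\partial_{1}F+(k+p)F=r^{-p}h.
\end{equation*}
The right-hand side $\tilde{h}:=r^{-p}h$ is an admissible series by hypothesis; call $q$ its minimal order. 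The coefficient $k+p$ is an admissible series (sum of the admissible series $k$ and the constant $p$), and its value at $O$ is $k_{0}+p>0$. Hence the last sentence of Lemma~\ref{lemma6.5} (the case in which the zero-order coefficient is strictly positive, which imposes no restriction on the minimal order of the right-hand side) applies and produces a unique admissible-series solution $F$ of this equation, of minimal order $q$. Multiplying back by $r^{p}$, and using that the product of $r^{p}$ with an admissible series is again an admissible series — with minimal order increased by $p$ — one obtains an admissible solution $r^{p}F$ of (\ref{6.4}); since $r^{-p}\cdot r^{p}F=F$, the conclusion follows once this $r^{p}F$ is identified with the given $f$.

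The content of this reduction can equivalently be spelled out at the level of Taylor coefficients. Writing $f=\sum_{m\ge 0}f_{m}r^{m}$, $h=\sum_{m\ge 0}h_{m}r^{m}$, $k=\sum_{m\ge 0}k_{m}r^{m}$ and plugging into (\ref{6.4}) gives the recurrence $(m+k_{0})f_{m}+\sum_{j=1}^{m}k_{j}f_{m-j}=h_{m}$. The hypothesis that $r^{-p}h$ be admissible says precisely that $h_{m}=0$ for $m<p$ and that $h_{p+t}$ is an admissible coefficient of order $t$ for every $t\ge 0$. One first checks that $f_{m}=0$ for $m<p$, i.e.\ that $f$ has minimal order at least $p$; then, for $t\ge 0$, the recurrence at $m=p+t$ reads $(p+t+k_{0})f_{p+t}=h_{p+t}-\sum_{j=1}^{t}k_{j}f_{p+t-j}$, the terms with $j>t$ dropping out because $f_{m}=0$ for $m<p$. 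By induction on $t$, using the lemma on sums and products of admissible coefficients, the right-hand side is an admissible coefficient of order $t$; and since $p+t+k_{0}\ge p+k_{0}>0$ one may divide, so $f_{p+t}$ is itself an admissible coefficient of order $t$. Convergence of $\sum_{t}f_{p+t}r^{t}$ is inherited from that of $f$ (or re-derived by the geometric estimate in the proof of Lemma~\ref{lemma6.5}). Hence $r^{-p}f=\sum_{t\ge 0}f_{p+t}r^{t}$ is an admissible series, and by (\ref{leading}) its minimal order equals that of $r^{-p}h$.

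The step I expect to be the real crux is the intermediate claim that $f_{m}=0$ for $m<p$ — equivalently, that the given admissible solution $f$ of (\ref{6.4}) already has minimal order at least $p$; the manipulation of admissible coefficients afterwards is routine given the lemmas already established. This is precisely where the hypothesis $p+k_{0}>0$ enters: it ensures the recurrence can be inverted at every order $\ge p$, so that no homogeneous mode of the transformed equation — which behaves like $r^{-(p+k_{0})}$ near $O$ and would destroy admissibility of $r^{-p}f$ — is present. With that vanishing established, the identification of $f$ with $r^{p}F$ from the first paragraph is immediate and the proof is complete.
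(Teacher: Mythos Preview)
Your approach is exactly the paper's: set $f=r^{p}\phi$, derive $r\partial_{1}\phi+(p+k)\phi=r^{-p}h$, and invoke Lemma~\ref{lemma6.5} with the shifted constant $p+k_{0}>0$. The paper's proof consists of just that substitution and stops there; your additional coefficient-level argument and the discussion of the identification step (which is handled by the uniqueness clause of Lemma~\ref{lemma6.5}, since $r^{p}F$ is an admissible solution of~(\ref{6.4}) and hence must coincide with $f$) are correct but go beyond what the paper spells out.
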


\begin{proof}
Set $f=r^{p}\phi$. If $f$ satisfies (\ref{6.4}) then $\phi$ satisfies the
equation
\begin{equation*}
r\partial_{1}\phi+(p+k)\phi=r^{-p}h.
\end{equation*}
\end{proof}

\begin{remark}
The following example is a case of a differential equation of the form (\ref%
{6.4}) with $q_{h}=1$, but $k_{0}$ a negative integer, which does not admit
as a solution an admissible series. Let
\begin{equation*}
k=\frac{1}{r-1}=-1-r-r^{2}-r^{3}-...,\qquad h=\frac{r}{1-r^{2}}%
=r+r^{3}+r^{5}+...,
\end{equation*}
We can solve the ODE explicitly,
\begin{equation*}
f=\frac{r}{r-1}(f_{\infty }+\log \frac{r+1}{r})
\end{equation*}
with $f_{\infty }$ an arbitrary integration constant, which cannot be
expanded in powers of $r$ near $0$. However if we change $k$ to $r/(r-1)$,
then $k_{0}$ changes from $-1$ to $0$, the problem disappears. Remark that
the problem also disappears if we change $h$ to $r^{2}/(1-r^{2})$, i.e. $%
q_{h}=2$. 
\end{remark}

In the following we will assume the metric $C$, of the form (\ref{4.3}) and
satisfying (\ref{4.4}) is analytic, takes Minkowskian values at the vertex $%
O $, and is such the components of its trace on $C_{O}$ satisfy
\begin{equation}
\underline{\bar{C}_{ij}}\equiv \delta_{ij}+\underline{\bar{c}_{ij}}, \qquad
\underline{\bar{C}^{ij}}\equiv \delta^{ij}+\underline{\bar{c}^{ij}}\; ,
\end{equation}
where $\underline{\bar{c}_{ij}}$ and $\underline{\bar{c}^{ij}}$ have
admissible expansions of minimal order 2 while $\overline{\underline{%
\partial_{0}c_{ih}}}$ has an admissible expansion of minimal order 1. The
definition (\ref{5.10}) implies then that
\begin{equation}
\underline{\bar{X}_{i}^{j}}\equiv \frac{1}{2}\underline{C^{jh}}\{r\overline{%
\underline{\partial _{0}c_{ih}}}+y^{k}\overline{\underline{\partial
_{k}c_{ih}}}\}
\end{equation}
has an admissible expansion of minimal order 2.

\section{The first wave-map gauge constraint}

We have deduced our first constraint\footnote{%
See I.} from the identity
\begin{equation}
\bar{\ell}^{\beta }\bar{S}_{1\beta }\equiv \bar{R}_{11}\equiv -\partial
_{1}\tau +\nu^{0}\partial_{1}\nu_{0}\tau -\frac{1}{2}\tau (\bar{\Gamma}_{1}
+\tau)-\chi_{A}^{B}\chi_{B}^{A},  \label{7.1}
\end{equation}
with
\begin{equation}
\bar{\Gamma}_{1}\equiv \bar{W}_{1}+\bar{H}_{1},\qquad \bar{W}_{1}\equiv -\nu
_{0}\bar{g}^{AB}rs_{AB}.  \label{7.2}
\end{equation}
Hence for the first wave-map gauge constraint in vacuum we have the equation
\begin{equation}
\mathcal{C}_{1}:=-\partial _{1}\tau +\nu ^{0}\partial _{1}\nu _{0}\tau -%
\frac{1}{2}\tau (\tau -\nu _{0}\bar{g}^{AB}rs_{AB})-\chi_{A}^{B}
\chi_{B}^{A}=0.  \label{7.3}
\end{equation}
When $\bar{g}_{AB}$ is known this equation reads as a first order
differential equation for $\nu _{0}$%
\begin{equation}
\nu^{0}\partial_{1}\nu_{0}=\tau^{-1}\partial_{1}\tau +\frac{1}{2}(\tau
-\nu_{0}\bar{g}^{AB}rs_{AB})+\tau^{-1}\chi_{A}^{B}\chi_{B}^{A}.  \label{7.4}
\end{equation}
It can be written as a linear equation for $\nu ^{0}-1$,
\begin{equation}
\partial _{1}\nu ^{0}+a(\nu ^{0}-1)+b=0,  \label{7.5}
\end{equation}
with
\begin{equation}
a:=\tau ^{-1}\partial _{1}\tau +\frac{1}{2}\tau +\tau ^{-1}|\chi|^{2},
\qquad |\chi|^{2}\equiv \chi_{A}^{B}\chi_{B}^{A},  \label{7.6}
\end{equation}
\begin{equation}
b:=a-\frac{1}{2}\bar{g}^{AB}rs_{AB}.  \label{7.7}
\end{equation}
In the flat case $\bar{g}^{AB}=\eta^{AB}$, $\tau =\frac{n-1}{r}$, $%
\chi_{A}^{B}=\frac{1}{r}\delta_{A}^{B}$ the equation reduces to:
\begin{equation*}
\partial_{1}\nu^{0}+\frac{1}{2}(\nu^{0}-1)\frac{n-1}{r}=0;
\end{equation*}
it has one solution tending to $1$ when $r$ tends to zero, $\nu_{0}=1$. In
the general case (\ref{7.5}) reads, with $f:=\nu^{0}-1$,
\begin{equation}
r\partial_{1}f+kf+h=0, \qquad k:=ar, \quad h:=br=ar-\frac{1}{2}\bar{g}%
^{AB}r^{2}s_{AB} .  \label{7.8}
\end{equation}
Recall that $x^{1}\equiv r$ and
\begin{equation}
\chi_{A}^{C}\equiv \frac{1}{2}\bar{g}^{BC}\partial_{1}\overline{g_{AB}}\;=
\frac{1}{r}(\bar{X}_{A}^{C}+\delta_{A}^{C}).  \label{7.9}
\end{equation}
Hence
\begin{equation*}
|\chi|^{2}=\frac{|\bar{X}|^{2}+2\overline{\mathrm{tr}X}+n-1}{r^{2}},
\end{equation*}
\begin{equation}
\tau =\frac{\overline{\mathrm{tr}X}}{r}+\frac{n-1}{r}, \qquad \tau^{-1}=%
\frac{r}{n-1+\overline{\mathrm{tr}X}},  \label{7.10}
\end{equation}
where $\overline{\mathrm{tr}X}$ is an admissible series of minimal order 2.
The function $\{1+\frac{1}{n-1}\overline{\mathrm{tr}X}\}^{-1}$ is the trace
of a $C^{\omega }$ function as long as $1+\frac{1}{n-1}\mathrm{tr}X$ does
not vanish, hence always in a neighbourhood of $O$ since $\mathrm{tr}X$
vanishes there.

It holds that
\begin{equation}
\partial_{1}\tau \equiv -\frac{n-1+\overline{\mathrm{tr}X}}{r^{2}}+ \frac{%
\partial_{1}\overline{\mathrm{tr}X}}{r},  \label{7.11}
\end{equation}
\begin{equation}
\tau^{-1}\partial_{1}\tau \equiv -\frac{1}{r}+ \frac{\partial_{1}\overline{%
\mathrm{tr}X}}{n-1+\overline{\mathrm{tr}X}}.  \label{7.12}
\end{equation}
Also we can write
\begin{equation}
\tau^{-1}|\chi|^{2}\equiv \frac{|\bar{X}|^{2}+2\overline{\mathrm{tr}X}+n-1} {%
r(n-1+\overline{\mathrm{tr}X})} \equiv \frac{|\bar{X}|^{2}+\overline{\mathrm{%
tr}X}} {r(n-1+\overline{\mathrm{tr}X})}+\frac{1}{r}.  \label{7.13}
\end{equation}
Finally computation gives
\begin{equation}
k\equiv ar\equiv \frac{n-1}{2}+\frac{r\partial_{1}\overline{\mathrm{tr}X}+
\overline{\mathrm{tr}X}+|\bar{X}|^{2}}{n-1+\overline{\mathrm{tr}X}}.
\label{7.14}
\end{equation}

We see that $k-\frac{n-1}{2}$ admits in a neighbourhood of $O$ an admissible
development of minimal order 2.

On the other hand, since in the $x$ coordinates $\eta_{1\alpha}=0$, $%
r^{2}s_{AB}=\eta_{AB}$ and we have assumed $\bar{C}^{00}=\bar{C}^{0A}=0$, $%
\bar{C}^{01}=1$, $\bar{g}^{AB}=\bar{C}^{AB}$, we have
\begin{equation*}
\bar{g}^{AB}r^{2}s_{AB}\equiv \bar{g}^{AB}\eta _{AB}\equiv \bar{C}^{AB}\eta
_{AB}\equiv \bar{C}^{\alpha \beta }\eta _{\alpha \beta }-2.
\end{equation*}
Hence, using now the values of $\underline{C_{\alpha\beta}}$
\begin{equation}
\frac{1}{2}\bar{g}^{AB}r^{2}s_{AB}\equiv \frac{1}{2}\overline{\underline{%
(1+C^{ij}\delta _{ij}-2)}}\equiv \frac{n-1}{2}+\frac{1}{2}\overline{%
\underline{c^{ij}\delta _{ij}}},  \label{7.15}
\end{equation}
where $\overline{\underline{c^{ij}\delta_{ij}}}$ has an admissible
development of minimal order $2$. We conclude that
\begin{equation}
h\equiv \frac{r\partial _{1}\overline{\mathrm{tr}X}+\overline{\mathrm{tr}X}+|%
\bar{X}|^{2}}{(n-1+\overline{\mathrm{tr}X})}-\frac{1}{2}\overline{\underline{%
c^{ij}\delta _{ij}}}  \label{7.16}
\end{equation}
admits also such a development. Lemma \ref{lemma6.5} applies, and we have
proved:

\begin{theorem}
If the basic characteristic data are induced on $C_{O}$ by a $C^{\omega }$
(i.e. analytic) metric of the form (\ref{4.3}), hence satisfying (\ref{3.4}%
), then $\nu^{0}-1$ admits an admissible expansion of minimal order $2$,
hence is the trace in a neighbourhood of $O$ of a $C^{\omega}$ spacetime
function. Then $\nu^{0}=\bar{N}^{0}$ with $N^{0}\in C^{\omega}$ and $%
N^{0}(O)=1$. In a neighbourhood of $O$, it holds that $\nu_{0}=\bar{N}_{0}$,
$N_{0}=(N^{0})^{-1}$.
\end{theorem}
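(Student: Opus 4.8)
The plan is to read the statement off the ODE analysis already performed. In~(\ref{7.8}) the first wave-map gauge constraint~(\ref{7.5}) was rewritten, with $f:=\nu^{0}-1$, as $r\partial_{1}f+kf+h=0$, which is of the form~(\ref{6.4}) with right-hand side $-h$. So the work reduces to (i) checking that $k$ and $h$ satisfy the hypotheses of Lemma~\ref{lemma6.5}, (ii) invoking that lemma to obtain $f$ as an admissible series, and (iii) translating ``admissible series'' into ``trace of a $C^{\omega}$ function'' via the Cagnac admissible-series lemma of Section~6.

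For step (i): formula~(\ref{7.14}) displays $k$ as $\tfrac{n-1}{2}$ plus an admissible series of minimal order~$2$; since $\overline{\mathrm{tr}X}$ vanishes at $O$, the constant term is $k_{0}=\tfrac{n-1}{2}$, which is $>0$ because $n\geq 2$. Formula~(\ref{7.16}), combined with the Proposition of Section~6 (minimal orders $2$ and $4$ for $\overline{\mathrm{tr}X}$ and $|X|^{2}$), the rule that $r\partial_{1}$ preserves the minimal order of an admissible series of nonzero minimal order, the fact that $\{n-1+\overline{\mathrm{tr}X}\}^{-1}$ is admissible of minimal order~$0$ (reciprocal of an admissible series nonvanishing at $O$), and~(\ref{7.15}) for the term $\tfrac12\overline{\underline{c^{ij}\delta_{ij}}}$, shows that $h$, hence $-h$, is an admissible series of minimal order $q_{h}=2$. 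Lemma~\ref{lemma6.5} (its hypothesis $k_{0}\geq 0$ is met, and in fact $q_{h}\geq 1$) then produces a unique admissible-series solution $f$, of minimal order~$2$ — the admissible solution being the one compatible with the limit condition $\nu^{0}\to 1$ at the vertex. By the Cagnac lemma, $f$ is the trace near $O$ of a $C^{\omega}$ spacetime function $F$ with $F(O)=0$; setting $N^{0}:=1+F$ gives $N^{0}\in C^{\omega}$, $N^{0}(O)=1$ and $\nu^{0}=\bar N^{0}$.

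It remains to recover $\nu_{0}$. Because the trace $\bar g$ has the form~(\ref{3.4}), i.e.\ $\bar g_{11}=\bar g_{1A}=0$, $\bar g_{01}=\nu_{0}$ and $\bar g_{AB}$ invertible, computing the $\bar g^{0\mu}$ row of the inverse metric gives $\bar g^{00}=\bar g^{0A}=0$ and $\nu^{0}\equiv\bar g^{01}=1/\nu_{0}$, i.e.\ $\nu_{0}=(\nu^{0})^{-1}$. Since $\nu^{0}=1+f$ is admissible of minimal order~$0$ with constant term $1\neq 0$, its reciprocal is admissible of minimal order~$0$ and $1/\nu^{0}-1=-f(1+f)^{-1}$ is admissible of minimal order~$2$; the Cagnac lemma again gives $\nu_{0}=\bar N_{0}$ with $N_{0}:=(N^{0})^{-1}\in C^{\omega}$ near $O$ and $N_{0}(O)=1$.

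The only genuine computation is the reduction~(\ref{7.14})--(\ref{7.16}), expressing $k$ and $h$ through $\overline{\mathrm{tr}X}$, $|\bar X|^{2}$ and $\overline{\underline{c^{ij}\delta_{ij}}}$ — all already known to be admissible — after which the theorem is bookkeeping. The one point to watch is the sign of $k_{0}$: here $k_{0}=(n-1)/2>0$, and it is exactly this positivity (contrast the Remark with $k=1/(r-1)$) that forces an admissible rather than logarithmically singular solution and that propagates the minimal order~$2$ from $h$ to $f$ and thence to $\nu_{0}-1$.
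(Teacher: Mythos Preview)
Your proof is correct and follows essentially the same route as the paper: the paper's argument \emph{is} the computation culminating in (\ref{7.14})--(\ref{7.16}), followed by the sentence ``Lemma~\ref{lemma6.5} applies, and we have proved:''. You reproduce this, and in addition you spell out the final clause about $\nu_{0}=(\nu^{0})^{-1}$ being the trace of $N_{0}:=(N^{0})^{-1}\in C^{\omega}$, which the paper states in the theorem but does not separately justify.
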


In the expression of the characteristic initial data in $y$ coordinates
appears $r^{-2}(\nu _{0}-1)$ which though continuous on each ray as $r$
tends to zero, is not an admissible expansion. We introduce the following
definition.

\begin{definition}
\label{Def} A metric $C$ satisfying (\ref{4.3})-(\ref{4.4}) is said to be
near-round at the vertex if there is a neighbourhood of $O$ where $r^{-1}%
\underline{\bar{c}}_{ij}\equiv \underline{\bar{d}}_{ij}$ and $r^{-2}%
\underline{c}_{ij}\delta ^{ij}\equiv \bar{D}$ \ with $\underline{\bar{d}}%
_{ij}$ and $\bar{D}$ admissible series.
\end{definition}

If $C$ is near-round at the vertex $\ \underline{\overset{\_}{C}}_{ij}$ has
an analytic extension $\underline{C_{ij}}$ of the form, with $\underline{%
d_{ij}}$ analytic extension of $\underline{\bar{d}_{ij}}$,
\begin{equation*}
\underline{C}_{ij}\equiv \delta _{ij}+\underline{c}_{ij}\text{ \ \ \ }%
\underline{c}_{ij}\equiv y^{0}\underline{d}_{ij},\text{ }
\end{equation*}%
therefore
\begin{equation*}
\underline{\partial _{h}C_{ij}}\equiv 0\text{ \ \ for \ \ }y^{0}=0,\text{ \
hence}\quad \underline{\partial _{h}C^{ij}}\equiv 0\text{ \ \ \ for \ \ }%
y^{0}=0,
\end{equation*}%
and
\begin{equation*}
\underline{C^{ij}}\equiv \delta ^{ij}+y^{0}\underline{d^{ij}},
\end{equation*}%
with $\underline{d^{ij}}$ some analytic functions. Hence
\begin{equation*}
\underline{C^{ij}c_{ij}}\equiv (y^{0})^{2}\{D+\underline{d^{ij}}\underline{%
d_{ij}}\}.
\end{equation*}
Using the definition of $\underline{X_{\alpha\beta}}$ we see
that if $C$ is near-round at the vertex, then
\begin{equation}
\underline{X_{\alpha\beta}}\equiv y^{0}\underline{Y_{\alpha\beta}}, \quad
\text{with} \quad \underline{Y}_{ij}:=\frac{1}{2}\{\underline{d_{ij}}+ y^{0}%
\underline{\partial_{0}d_{ij}}+ y^{h}\underline{\partial_{h}d_{ij}}\}, \quad
\underline{Y_{i0}}\equiv \underline{Y_{00}}\equiv 0.  \label{7.17}
\end{equation}
Hence
\begin{equation*}
\mathrm{tr}X\equiv y^{0}\,\mathrm{tr}Y, \quad \text{with} \quad \mathrm{tr}
Y\equiv \underline{C^{ij}}\underline{Y_{ij}}\equiv \frac{1}{2}\underline{%
C^{ij}}\{ \underline{d_{ij}}+ y^{0}\underline{\partial_{0}d_{ij}}+ y^{h}%
\underline{\partial_{h}d_{ij}}\}.
\end{equation*}
An elementary computation shows that $\mathrm{tr}Y$ is of the
following form, with $Z$ an analytic function,
\begin{equation*}
\mathrm{tr}Y\equiv y^{0}Z, \qquad \text{hence} \qquad \mathrm{tr}X\equiv
(y^{0})^{2}Z.
\end{equation*}
On the other hand
\begin{equation*}
\underline{X_{j}^{i}}\equiv \underline{C^{ih}}\underline{X_{jh}}= y^{0}
\underline{C^{ih}}\underline{Y_{jh}}:=y^{0}\underline{Y_{j}^{i}},
\end{equation*}
therefore
\begin{equation*}
|X|^{2}\equiv \underline{X_{j}^{i}}\underline{X_{i}^{j}}\equiv (y^{0})^{2}%
\underline{Y_{j}^{i}}\underline{Y_{i}^{j}}.
\end{equation*}

We deduce from these formulas that if $C$ is near-round at the vertex then $%
r^{-2}\overline{\mathrm{tr}X}\equiv \bar{Z}$ and $r^{-2}|\bar{X}|^{2}\equiv |%
\bar{Y}|^{2}$ are admissible series.

\begin{theorem}
A sufficient condition for $r^{-2}(\nu ^{0}-1)$, with $\nu _{0}$ solution of
the first wave-map gauge constraint, to have an admissible expansion is that
the $C^{\omega }$ metric $C$ given by (\ref{4.3}) which induces the basic
characteristic data be near-round at the vertex.
\end{theorem}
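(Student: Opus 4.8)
The plan is to revisit the analysis of the first wave-map gauge constraint carried out in the previous theorem, but now to track one extra power of $r$. Recall that $f := \nu^{0}-1$ satisfies the linear ODE
\begin{equation*}
r\partial_{1}f+kf+h=0,\qquad k\equiv ar,\quad h\equiv br,
\end{equation*}
with $k-\tfrac{n-1}{2}$ and $h$ both admissible of minimal order $2$, and with $k_{0}=k(0)=\tfrac{n-1}{2}>0$. By Corollary~\ref{corollary6.6}, applied with $p=2$, it suffices to show that $r^{-2}h$ is an admissible series: then $r^{-2}f$ is an admissible series of the same minimal order, which is exactly the assertion. So the whole problem reduces to proving that $r^{-2}h$ is admissible under the near-roundness hypothesis. (One checks $p+k_{0}=2+\tfrac{n-1}{2}>0$, so the corollary does apply.)

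Next I would assemble $r^{-2}h$ from the formula $(\ref{7.16})$,
\begin{equation*}
h\equiv \frac{r\partial _{1}\overline{\mathrm{tr}X}+\overline{\mathrm{tr}X}+|\bar{X}|^{2}}{n-1+\overline{\mathrm{tr}X}}-\frac{1}{2}\overline{\underline{c^{ij}\delta _{ij}}},
\end{equation*}
and divide each piece by $r^{2}$. By the near-roundness computations just preceding the statement, $r^{-2}\overline{\mathrm{tr}X}\equiv\bar{Z}$ and $r^{-2}|\bar{X}|^{2}\equiv|\bar{Y}|^{2}$ are admissible series. For the term $r^{-2}\,r\partial_{1}\overline{\mathrm{tr}X}$: since $\overline{\mathrm{tr}X}\equiv r^{2}\bar Z$ with $\bar Z$ admissible, one has $r\partial_{1}\overline{\mathrm{tr}X}\equiv r\partial_{1}(r^{2}\bar Z)=2r^{2}\bar Z+r^{3}\partial_{1}\bar Z$, so $r^{-2}\,r\partial_{1}\overline{\mathrm{tr}X}\equiv 2\bar Z+r\partial_{1}\bar Z$, which is admissible by item~4 of the list following Lemma~6.4 (and by items 1--2). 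The denominator $n-1+\overline{\mathrm{tr}X}$ equals $(n-1)(1+\tfrac{1}{n-1}r^{2}\bar Z)$, nonvanishing near $O$, so its inverse is admissible of minimal order $0$ by item~3. Finally, near-roundness gives directly $r^{-2}\,\overline{\underline{c^{ij}\delta_{ij}}}\equiv \bar D$ admissible (this is essentially the definition, via $\underline{C^{ij}c_{ij}}\equiv (y^{0})^{2}\{D+\underline{d^{ij}d_{ij}}\}$ restricted to $C_{O}$, where $y^{0}=r$). Collecting terms and using closure of admissible series under products and sums (Lemma~6.4 and its corollaries), $r^{-2}h$ is an admissible series.

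Having established that $r^{-2}h$ is admissible, Corollary~\ref{corollary6.6} yields that $r^{-2}f=r^{-2}(\nu^{0}-1)$ is an admissible series, completing the proof. The only mildly delicate point is bookkeeping: one must be careful that $\overline{\mathrm{tr}X}$ genuinely carries the factor $r^{2}$ and not merely $r$, which is where the near-roundness hypothesis is essential — without it, $\overline{\mathrm{tr}X}$ is only known to be of minimal order $2$ as an admissible series in the sense of $(\ref{6.1})$, and dividing by $r^{2}$ need not return an admissible series (the $r$-odd admissible coefficients obstruct this, exactly as in the Remark following Corollary~\ref{corollary6.6}). So the main obstacle is not a computation but the conceptual point that near-roundness is precisely the hypothesis that upgrades "admissible of minimal order $2$" to "$r^{2}$ times admissible", and one should state that upgrade cleanly (as was done for $\overline{\mathrm{tr}X}$, $|\bar X|^{2}$ and $\overline{\underline{c^{ij}\delta_{ij}}}$ just above) before invoking the corollary.
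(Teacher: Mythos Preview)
Your proposal is correct and follows the same approach as the paper's proof: reduce to showing that $r^{-2}h$ is admissible and then invoke Corollary~\ref{corollary6.6} (the paper writes out the ODE for $\phi=r^{-2}f$ explicitly before citing the corollary, but this is the same step). One small caveat: the term in $h$ is $\overline{\underline{c^{ij}\delta_{ij}}}$ with \emph{upper} indices on $c$, whereas Definition~\ref{Def} controls $c_{ij}\delta^{ij}$; the passage between them uses $C^{ik}C_{kj}=\delta^{i}_{j}$, giving $c^{ij}\delta_{ij}=-c_{ij}\delta^{ij}-c^{ij}c_{ij}$, with the last term equal to $(y^{0})^{2}d^{ij}d_{ij}$ as computed just above the statement.
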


\begin{proof}
Since $f:=\nu_{0}-1$ satisfies the equation (\ref{7.8}), $%
\phi:=r^{-2}(\nu_{0}-1)$ satisfies
\begin{equation}
r\partial_{1}\phi +(2+k)\phi +r^{-2}h=0.
\end{equation}
The expression (\ref{7.16}) shows that for a metric $C$ round at the vertex $%
r^{-2}h$ admits an admissible development, the application of Corollary \ref%
{corollary6.6} gives the result.
\end{proof}

\section{The $C_{A}$ constraint}

We have written in I the $C_{A}$ constraint in vacuum as
\begin{equation*}
\mathcal{C}_{A}= -\frac{1}{2}(\partial_{1}\xi_{A}+\tau \xi_{A}) +\tilde{%
\nabla}_{B}\chi_{A}^{B} -\frac{1}{2}\partial_{A}\tau +\partial_{A}(\frac{1}{2%
}\bar{W}_{1}+\nu_{0}\partial_{1}\nu^{0}) ,
\end{equation*}
where $\xi_{A}$ is defined as
\begin{equation}
\xi_{A}:= -2\nu^{0}\partial_{1}\nu_{A} +4\nu^{0}\nu_{C}\chi_{A}^{C} +\left(
\bar{W}^{0}-\frac{2}{r}\nu^{0}\right) \nu_{A} +\bar{g}_{AB} \bar{g}%
^{CD}(S_{CD}^{B}-\tilde{\Gamma}_{CD}^{B})\;.  \label{8.1}
\end{equation}

Using the first constraint we find
\begin{equation}
\nu_{0}\partial_{1}\nu^{0}+\frac{1}{2}\bar{W}_{1}=-a.  \label{8.2}
\end{equation}
where $a$ is given by (\ref{7.14}), hence
\begin{equation*}
\partial_{A}(a+\frac{1}{2}\tau)\equiv r^{-1}\partial_{A}F(\overline{\mathrm{%
tr}X},|\bar{X}|^{2}),
\end{equation*}
where
\begin{eqnarray*}
F(\overline{\mathrm{tr}X},|\bar{X}|^{2}) &:=& \frac{r\partial_{1}\overline{%
\mathrm{tr}X} +\overline{\mathrm{tr}X} +|\bar{X}|^{2}} {n-1+\overline{%
\mathrm{tr}X}} +\frac{1}{2}\overline{\mathrm{tr}X} \\
&\equiv& \frac{r\partial_{1}\overline{\mathrm{tr}X} +\frac{1}{2}\{(n+1)%
\overline{\mathrm{tr}X} +|\overline{\mathrm{tr}X}|^{2}\} +|\bar{X}|^{2}} {%
n-1+\overline{\mathrm{tr}X}}
\end{eqnarray*}
admits in a neighbourhood of $O$ an admissible development of minimal order
2.

We have:
\begin{equation}
\mathcal{C}_{A}\equiv -\frac{1}{2}(\partial_{1}\xi_{A}+\tau \xi_{A})+ \tilde{%
\nabla}_{B}\chi_{A}^{B}-r^{-1}\partial_{A}F(\overline{\mathrm{tr}X} ,|\bar{X}%
|^{2})\;=0.  \label{8.3}
\end{equation}

\subsection{Equations for $\protect\xi_{A}$}

We set $\xi _{1}=\xi _{0}=0$ on the cone and we define $\underline{\xi }_{i}$
by
\begin{equation}
\underline{\xi }_{i}:=\frac{\partial x^{\alpha }}{\partial y^{i}}\xi
_{\alpha }\equiv \frac{\partial x^{A}}{\partial y^{i}}\xi _{A}.  \label{8.4}
\end{equation}
It holds that
\begin{equation}
y^{i}\underline{\xi }_{i}=0,  \label{8.5}
\end{equation}
because (recall that $x^{1}=r$, $y^{i}\equiv r\Theta ^{i}(x^{A})$)
\begin{equation}
y^{i}\frac{\partial x^{A}}{\partial y^{i}}\equiv r\frac{\partial y^{i}}{%
\partial x^{1}}\frac{\partial x^{A}}{\partial y^{i}}\equiv r\delta
_{1}^{A}=0,\qquad \frac{\partial }{\partial r}\frac{\partial y^{i}}{\partial
x^{A}}\equiv \frac{1}{r}\frac{\partial y^{i}}{\partial x^{A}}.  \label{8.6}
\end{equation}
We have
\begin{equation}
\frac{\partial y^{i}}{\partial x^{A}}\underline{\xi }_{i}\equiv \frac{%
\partial y^{i}}{\partial x^{A}}\frac{\partial x^{B}}{\partial y^{i}}\xi
_{B}\equiv \delta_{A}^{B}\xi_{B}\equiv \xi_{A} .  \label{8.7}
\end{equation}
The equation (\ref{8.7}) implies that
\begin{equation*}
\partial _{1}\xi _{A}\equiv (\frac{\partial }{\partial r}\underline{\xi _{i}}%
+r^{-1}\underline{\xi _{i}})\frac{\partial y^{i}}{\partial x^{A}} ,
\end{equation*}
hence
\begin{equation*}
\mathcal{C}_{A}\equiv -\frac{1}{2}\frac{\partial y^{i}}{\partial x^{A}}\{%
\frac{\partial }{\partial r}\underline{\xi _{i}}+\underline{\xi _{i}}%
(r^{-1}+\tau )\}+\tilde{\nabla}_{B}\chi_{A}^{B}-r^{-1}\partial _{A}F(%
\overline{\mathrm{tr}X},|\bar{X}|^{2})=0.
\end{equation*}
Since $\mathrm{tr}X\equiv \underline{\mathrm{tr}X}$ is a scalar function and
the equation of $C_{O}$ in the $x$ coordinates is $x^{0}=0$ and $y^{0}$ does
not depend on $x^{A}$, it holds that
\begin{equation*}
\frac{\partial}{\partial x^{A}}\overline{\mathrm{tr}X}\equiv \overline{\frac{%
\partial}{\partial x^{A}}\mathrm{tr}X}\equiv \overline{\frac{\partial y^{i}}{%
\partial x^{A}} \frac{\partial}{\partial y^{i}}\underline{\mathrm{tr}X}}%
\equiv \frac{\partial y^{i}}{\partial x^{A}} \overline{\frac{\partial }{%
\partial y^{i}}\underline{\mathrm{tr}X}},
\end{equation*}
analogously
\begin{equation*}
\frac{\partial }{\partial x^{A}}\overline{|X|^{2}}\equiv \frac{\partial y^{i}%
}{\partial x^{A}}\overline{\frac{\partial }{\partial y^{i}}\underline{|X|^{2}%
}},\qquad \frac{\partial }{\partial x^{A}}\overline{F(\mathrm{tr}X,|X|^{2})}%
\equiv \frac{\partial y^{i}}{\partial x^{A}}\overline{\frac{\partial }{%
\partial y^{i}}\underline{F(\mathrm{tr}X,|X|^{2})}}.
\end{equation*}

We now compute, with covariant derivatives $\tilde{\nabla}$ taken in the
Riemannian metric $\bar{g}_{AB}\equiv \bar{C}_{AB}$
\begin{equation*}
\tilde{\nabla}_{B}\chi_{A}^{B}\equiv \tilde{\nabla}_{B}(\frac{1}{x^{1}}\bar{X%
}_{A}^{B} +\frac{1}{x^{1}}\delta_{A}^{B})\equiv \frac{1}{x^{1}}\tilde{\nabla}%
_{B}\bar{X}_{A}^{B}.
\end{equation*}
The Christoffel symbols $\tilde{C}_{BC}^{A}$ of the Riemannian connection $%
\tilde{\nabla}$ are equal (recall that $C^{B0}=C^{B1}=C^{00}=0)$ to the
trace on $C_{O}$ of the Christoffel symbols with the same indices of the
spacetime metric $C$, hence, denoting by $^{(C)}\nabla$ the covariant
derivative in the metric $C$
\begin{equation*}
\tilde{\nabla}_{B}\chi_{A}^{B}\equiv \frac{1}{x^{1}}\overline{%
^{(C)}\nabla_{B}X_{A}^{B}}.
\end{equation*}
Since the $X_{A}^{B}$ are the only non vanishing components of the tensor $%
X, $ and due to the form chosen for the metric $C$ we find that
\begin{equation*}
\overline{^{(C)}\nabla_{B}X_{A}^{B}}= \overline{^{(C)}\nabla_{\alpha}X_{A}^{%
\alpha}}= \frac{\partial y^{i}}{\partial x^{A}} \underline{\overline{%
^{(C)}\nabla_{\alpha }X_{i}^{\alpha}}}
\end{equation*}
and the equations (\ref{8.3}) read
\begin{equation}
\mathcal{C}_{A}\equiv \frac{\partial y^{j}}{\partial x^{A}}\frac{1}{r} %
\Big\{ -\frac{1}{2}[r\frac{\partial}{\partial r}\underline{\xi_{j}} +%
\underline{\xi_{j}}(n+\underline{\overline{\mathrm{tr}X}})] +\underline{%
\overline{ ^{(C)}\nabla_{\alpha}X_{j}^{\alpha}}} -\underline{\overline{\frac{%
\partial}{\partial y^{j}} F(\mathrm{tr}X,|X|^{2}}}) \Big\}=0.  \label{8.8}
\end{equation}
The parentheses constitute a linear diagonal operator on the $\underline{%
\xi_{j}}$ of the type considered in lemma \ref{lemma6.5}. Equating it to
zero gives an equation with \emph{solution} $\underline{\xi_{j}}$ \emph{an
admissible series of minimal order 1}. We denote by \underline{$\Xi_{j}$}
the extension of $\underline{\xi_{i}}$ to spacetime, that is we have
\begin{equation}
\underline{\xi_{i}}\equiv \overline{\underline{\Xi_{i}}} ,  \label{8.9}
\end{equation}
where $\underline{\Xi_{i}}$ are analytical functions beginning by linear
terms.

\subsection{Equations for $\protect\underline{\protect\nu_{i}}$}

We now consider the equations (\ref{8.1}) which read
\begin{equation}
\partial_{1}\nu_{A}+\left( \frac{1}{r}-\frac{1}{2}\bar{W}_{1}\right) \nu
_{A}-2\nu_{C}\chi_{A}^{C}-\frac{1}{2}\nu_{0}\bar{g}_{AB} \bar{g}%
^{CD}(S_{CD}^{B}-\tilde{\Gamma}_{CD}^{B})+\frac{1}{2}\nu_{0}\xi_{A}=0\;.
\label{8.10}
\end{equation}
We set
\begin{equation}
\underline{\bar{g}_{0i}}\equiv -\underline{\nu_{i}}+\lambda \underline{\bar{L%
}_{i}},\text{ \ \ \ with \ \ }\underline{\bar{L}_{i}}\equiv \underline{\bar{C%
}_{ij}}\underline{\bar{L}^{j}},\text{ \ }\underline{\bar{L}^{j}}\equiv y^{j},
\label{8.11}
\end{equation}
with $\underline{\nu_{i}}$ such that
\begin{equation}
\underline{\nu_{i}\bar{L}^{i}}\equiv \underline{\nu_{i}} y^{i}=0;
\label{8.12}
\end{equation}
that is, using (\ref{8.11}),
\begin{equation*}
\lambda \equiv (\underline{\bar{L}_{i}\bar{L}^{i}})^{-1}\underline{\bar{g}%
_{0j}\bar{L}^{j}}\equiv r^{-2}y^{j}\underline{\bar{g}_{0j}}.
\end{equation*}
Then  (compare (\ref{4.5}))
\begin{equation*}
\nu_{A}\equiv -\frac{\partial y^{i}}{\partial x^{A}}\underline{\bar{g}_{0i}}%
\equiv \frac{\partial y^{i}}{\partial x^{A}}\underline{\nu_{i}}.
\end{equation*}
Hence
\begin{equation*}
\partial_{1}\nu_{A}\equiv \frac{\partial y^{i}}{\partial x^{A}}(\partial _{1}%
\underline{\nu_{i}}+r^{-1}\underline{\nu_{i}}).
\end{equation*}
We recall that
\begin{equation*}
\chi_{A}^{C}\equiv \frac{1}{r}(\bar{X}_{A}^{C}+\delta_{A}^{C}) \quad \text{%
i.e.} \quad r\nu_{C}\chi_{A}^{C}\equiv \nu_{C}\bar{X}_{A}^{C}+\nu_{A}.
\end{equation*}
Therefore, after product by $r$, the equations can be written as follows
\begin{equation*}
\frac{\partial y^{i}}{\partial x^{A}}(r\partial_{1}\underline{\nu_{i}} -%
\frac{1}{2}r\bar{W}_{1}\underline{\nu_{i}} +\frac{1}{2}\nu_{0}r\underline{%
\xi_{i}}) -2\bar{F}_{A}-\frac{1}{2}r\nu_{0}\bar{E}_{A}=0 ,
\end{equation*}
with
\begin{equation*}
\bar{F}_{A}:=\nu_{C}\bar{X}_{A}^{C}, \qquad \bar{E}_{A}:=\bar{g}_{AB}\bar{g}%
^{CD}(S_{CD}^{B}-\tilde{\Gamma}_{CD}^{B}).
\end{equation*}
By definition it holds that
\begin{equation*}
\nu_{C}\bar{X}_{A}^{C}\equiv \overline{g_{0C}X_{A}^{C}},
\end{equation*}
and since $X_{A}^{C}$ are the only non vanishing components of the mixed
tensor $X$ in the coordinates $x$
\begin{equation*}
g_{0C}X_{A}^{C}=g_{0\lambda }X_{A}^{\lambda }\equiv \frac{\partial y^{\alpha
}}{\partial x^{0}}\frac{\partial y^{\beta }}{\partial x^{A}}\underline{%
g_{\alpha \lambda }}\underline{X_{\beta }^{\lambda }}
\end{equation*}
Recalling that $\underline{X_{i}^{0}}\equiv 0$ we find, (using (\ref{5.11}),
$\underline{\overline{X_{i}^{j}L_{j}}}=0)$
\begin{equation*}
\bar{F}_{A}\equiv \nu_{C}\bar{X}_{A}^{C}\equiv -\frac{\partial y^{i}}{%
\partial x^{A}}\, \underline{\overline{g_{0j}X_{i}^{j}}}\equiv \frac{%
\partial y^{i}}{\partial x^{A}} \, \underline{\overline{\nu_{j}X_{i}^{j}}} .
\end{equation*}

We then remark that $S_{CD}^{B}$ $-\tilde{C}_{CD}^{B}$ is the trace on $%
C_{O} $ of the difference of the components of the Christoffel symbols, $%
\eta_{\beta\gamma}^{\alpha}$ and $C_{\beta\gamma}^{\alpha}$, with these
angular $x$ indices of the Minkowski metric $\eta$ and the metric $C:$
\begin{equation*}
\bar{E}_{A}:=\bar{g}_{AB}\bar{g}^{CD}(S_{CD}^{B}-\tilde{C}_{CD}^{B}), \quad
\text{with} \quad E_{A}\equiv C_{AB}C^{CD}(\eta_{CD}^{B}-C_{CD}^{B}).
\end{equation*}
Using the expressions of $\eta$ and $C$ and the vanishing of the Christoffel
symbols of $\eta$ in the $y$ coordinates we find
\begin{equation*}
\bar{E}_{A} \equiv \overline{C_{A\alpha}C^{\lambda\mu}
(\eta_{\lambda\mu}^{\alpha}-C_{\lambda\mu}^{\alpha})} \equiv \frac{\partial
y^{i}}{\partial x^{A}} \overline{\underline{C_{i\alpha}C^{\lambda\mu}
(\eta_{\lambda\mu}^{\alpha}-C_{\lambda\mu}^{\alpha})}} \equiv -\frac{%
\partial y^{i}}{\partial x^{A}} \overline{\underline{C_{ij}C^{\lambda\mu}C_{%
\lambda\mu}^{j}}},
\end{equation*}
with $\underline{C_{\lambda\mu}^{j}}$ analytic functions, components of
Christoffel symbols of the metric $C$ in $y$ coordinates, that is, using the
values of the $y$ components of the metric $C$
\begin{equation*}
\underline{C_{ij}C^{\lambda \mu }C_{\lambda \mu }^{j}}\equiv \frac{1}{2}%
\underline{C^{hk}(\partial_{h}c_{ik}+\partial_{h}c_{ik}-\partial_{i}c_{hk})}.
\end{equation*}

We recall from (\ref{7.15}) that
\begin{equation*}
r\bar{W}_{1}\equiv -\nu_{0}\bar{g}^{AB}r^{2}s_{AB}\equiv -\nu_{0}\{n-1+
\overline{\underline{c^{ij}\delta_{ij}}}\},
\end{equation*}
and we find that the equations (\ref{8.10}) can be written
\begin{equation*}
\frac{\partial y^{i}}{\partial x^{A}}\mathcal{L}_{i}=0 \; ,
\end{equation*}
where $\mathcal{L}_{i}$ is the following linear operator on $\underline{%
\nu_{i}}$
\begin{equation}
r\partial_{1}\underline{\nu_{i}} +\nu_{0}\{\frac{n-1}{2} +\frac{1}{2}
\overline{\underline{c^{hk}\delta_{hk}}}\} \underline{\nu_{i}} -2\underline{%
\overline{\nu_{j}X_{i}^{j}}} +\frac{1}{2}\nu_{0}r\underline{\xi_{i}} -\frac{1%
}{2}r\nu_{0} \overline{\underline{C_{ij}C^{\lambda\mu}C_{\lambda\mu}^{j}}}
=0\;.  \label{8.13}
\end{equation}

We extend as follows Lemma~\ref{lemma6.5}.

\begin{lemma}
If $k_{i}^{j}$ and $h_{i}$ are admissible series of minimal orders $1$, and
the constant $k_{0}\geq 0$ then the ODE
\begin{equation}
r\partial_{1}\underline{\nu_{i}}+k_{0}\underline{\nu_{i}}+k_{i}^{j}
\underline{\nu_{j}}=h_{i}  \label{8.14}
\end{equation}
admits a solution $\underline{\nu_{i}}$ which is also an admissible series
of the same minimal order than $h$.
\end{lemma}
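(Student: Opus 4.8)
The plan is to imitate the proof of Lemma~\ref{lemma6.5} line by line, with the scalar unknown replaced by the vector $\underline{\nu_i}$ and numerical Taylor coefficients replaced by admissible coefficients, the only genuinely new ingredient being the coupling term $k_i^j\underline{\nu_j}$, which I would control in the convergence estimate by dominating the whole system with a single scalar majorant ODE to which Lemma~\ref{lemma6.5} itself applies.

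First I would expand everything in powers of $r$ with admissible coefficients,
\[
\underline{\nu_i}=\sum_{p\geq 0}\nu_{i,p}\,r^{p},\qquad k_i^j=\sum_{p\geq 1}k^j_{i,p}\,r^{p},\qquad h_i=\sum_{p\geq 1}h_{i,p}\,r^{p},
\]
the minimal orders of $k_i^j$ and $h_i$ being $1$ by hypothesis; write $q_h=1$ for the minimal order of $h$. Using $r\partial_1\underline{\nu_i}=\sum_{p\geq1}p\,\nu_{i,p}r^{p}$ and $k_i^j\underline{\nu_j}=\sum_{p\geq2}\big(\sum_{q=1}^{p-1}k^j_{i,q}\nu_{j,p-q}\big)r^{p}$, I would substitute into (\ref{8.14}) and identify the coefficient of $r^{p}$: at order $0$ this reads $k_0\nu_{i,0}=0$, satisfied by the choice $\nu_{i,0}=0$, and then
\[
(1+k_0)\nu_{i,1}=h_{i,1},\qquad (p+k_0)\nu_{i,p}=h_{i,p}-\sum_{q=1}^{p-1}k^j_{i,q}\nu_{j,p-q}\quad(p\geq 2).
\]
Since $k_0\geq 0$ we have $p+k_0>0$ for every $p\geq 1$, so these relations determine all the $\nu_{i,p}$ uniquely, and by the rules for sums and products of admissible coefficients each $\nu_{i,p}$ is again an admissible coefficient of order $p$. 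A one-line induction --- using $h_{i,p}=0$ and $\nu_{j,p-q}=0$ whenever the index is below $q_h$ --- shows $\nu_{i,p}=0$ for $p<q_h$ and $\nu_{i,q_h}=h_{i,q_h}/(q_h+k_0)$, so the resulting formal series has the same minimal order $q_h=1$ as $h$.

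The substantive step is the convergence estimate. I would equip admissible coefficients with the norm $\|f_p\|:=\sup_{S^{n-1}}|f_p|$, which is submultiplicative, and use analyticity of the data to fix $c>0$ with $\|h_{i,p}\|\leq c^{p}$ and $\|k^j_{i,p}\|\leq c^{p}$ for all indices and all $p$. Setting $N_p:=\max_i\|\nu_{i,p}\|$, the recursion gives $N_1\leq c/(1+k_0)\leq c$ and, for $p\geq 2$,
\[
N_p\leq\frac{1}{p+k_0}\Big(c^{p}+n\sum_{q=1}^{p-1}c^{q}N_{p-q}\Big),
\]
where $n$ bounds the number of summands over $j$. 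I would then let $\Phi=\sum_{p\geq1}\Phi_p r^{p}$ be the solution, furnished by Lemma~\ref{lemma6.5}, of the scalar ODE $r\partial_1\Phi+(k_0-nK)\Phi=H$ with $K:=\sum_{p\geq1}c^{p}r^{p}$ and $H:=K$: here $k_0-nK$ equals $k_0\geq 0$ at the vertex and $H$ has minimal order $1$, so Lemma~\ref{lemma6.5} applies and $\Phi$ is a convergent series of minimal order $1$ whose coefficients satisfy exactly $(1+k_0)\Phi_1=c$ and $(p+k_0)\Phi_p=c^{p}+n\sum_{q=1}^{p-1}c^{q}\Phi_{p-q}$. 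These $\Phi_p$ are positive, and a comparison induction yields $N_p\leq\Phi_p$ for all $p$; since $\sum\Phi_p r^{p}$ converges near $r=0$, so does $\sum\nu_{i,p}r^{p}$, uniformly in the direction $\Theta$. Hence $\underline{\nu_i}$ is a bona fide admissible series solving (\ref{8.14}), of minimal order $q_h$.

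I expect this last estimate to be the only real obstacle: passing to norms in the coupling term $k_i^j\underline{\nu_j}$ costs a factor equal to the number of components, so the clean bound $N_p\leq c^{p}$ of the scalar case is no longer available, and one is forced to manufacture the scalar majorant $r\partial_1\Phi+(k_0-nK)\Phi=H$ and appeal to Lemma~\ref{lemma6.5} for it. Everything else is the bookkeeping of the scalar proof carried over verbatim to vector- and admissible-coefficient-valued data; in particular no separate argument for existence is needed once the coefficients are known to grow at most geometrically.
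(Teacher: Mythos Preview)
The paper states this lemma without proof, presenting it as a direct extension of Lemma~\ref{lemma6.5}; so there is no paper argument to compare against beyond the implicit claim that the scalar proof carries over. Your proof supplies exactly those details and is correct: the order-by-order identification and the admissibility of the coefficients are immediate from the scalar case, and your scalar majorant $r\partial_1\Phi+(k_0-nK)\Phi=H$ is a clean way to absorb the factor $n$ coming from the coupling $k_i^{\,j}\underline{\nu_j}$ --- since $K$ and $H$ are $\Theta$-independent geometric series, Lemma~\ref{lemma6.5} applies to the majorant verbatim and yields a convergent series whose positive coefficients dominate $N_p$. Nothing is missing.
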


Recalling that $\nu_{0}-1$ is an admissible series of minimal order 2 we see
that this lemma applies to (\ref{8.13}). We have proved:

\begin{theorem}
If the basic characteristic data is induced by an analytic metric $C$
satisfying (\ref{4.3})-(\ref{4.4}), the equation (%
\ref{8.13}) admits one and only one solution $\underline{\nu_{i}}$ which is
an admissible series of minimal order 2. We denote by $\underline{N_{i}}$
its spacetime extension.
\end{theorem}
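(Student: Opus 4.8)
The plan is to recognize equation~(\ref{8.13}) as a linear first-order ODE system in the variable $r\equiv x^{1}$ of precisely the form~(\ref{8.14}) treated by the extension of Lemma~\ref{lemma6.5} stated just above, and then to verify that its hypotheses hold. By the time one reaches~(\ref{8.13}) the functions $\nu _{0}$ (hence $\nu ^{0}=1/\nu _{0}$) and $\underline{\xi _{i}}$ have already been determined as admissible series, so (\ref{8.13}) is a genuine linear ODE for $\underline{\nu _{i}}$ with known coefficients and source, which I would write as
\begin{equation*}
r\partial _{1}\underline{\nu _{i}}+k_{0}\,\underline{\nu _{i}}+k_{i}^{j}\,\underline{\nu _{j}}=h_{i},\qquad k_{0}=\frac{n-1}{2},
\end{equation*}
\begin{equation*}
k_{i}^{j}=\Big(\nu _{0}\big\{\tfrac12(n-1)+\tfrac12\overline{\underline{c^{hk}\delta _{hk}}}\big\}-\tfrac12(n-1)\Big)\delta _{i}^{j}-2\,\overline{\underline{X_{i}^{j}}},\qquad h_{i}=\tfrac12\,r\,\nu _{0}\Big(\overline{\underline{C_{ij}C^{\lambda \mu }C_{\lambda \mu }^{j}}}-\underline{\xi _{i}}\Big),
\end{equation*}
noting that $k_{0}=\tfrac12(n-1)\geq \tfrac12>0$ since $n\geq 2$.

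The bulk of the argument is to check that $k_{i}^{j}$ is an admissible series of minimal order $\geq 1$ and that $h_{i}$ is an admissible series of minimal order $2$. For this I would invoke the algebra of admissible series — the lemma on sums and products of admissible coefficients together with its four listed consequences — along with the facts already established: $\nu ^{0}-1$, and hence $\nu _{0}-1=-(\nu ^{0}-1)/\nu ^{0}$, is an admissible series of minimal order $2$ with $\nu _{0}(O)=1$ (the theorem following~(\ref{7.16})); $\overline{\underline{c^{ij}\delta _{ij}}}$ is of minimal order $2$ by the standing hypotheses on the analytic metric $C$ (cf.~(\ref{7.15})); $\underline{\bar X_{i}^{j}}$ is of minimal order $2$; and $\underline{\xi _{i}}$ is of minimal order $1$ (cf.~(\ref{8.9})). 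These give at once that $k_{i}^{j}$ is of minimal order $2$. For $h_{i}$: $r\underline{\xi _{i}}$ is of minimal order $2$; and, using
\begin{equation*}
\overline{\underline{C_{ij}C^{\lambda \mu }C_{\lambda \mu }^{j}}}\;\equiv\;\tfrac12\,\overline{\underline{C^{hk}\big(\partial _{h}c_{ik}+\partial _{k}c_{ih}-\partial _{i}c_{hk}\big)}},
\end{equation*}
which is built from $\underline{C^{hk}}=\delta ^{hk}+(\text{minimal order }2)$ and first derivatives of the analytic functions $\underline{c_{ik}}$, one observes that since the restriction $\underline{\bar c_{ik}}$ is of minimal order $2$ the function $\underline{c_{ik}}$ vanishes to second order at $O$, so its first partial derivatives restrict to admissible series of minimal order $\geq 1$; hence $\overline{\underline{C_{ij}C^{\lambda \mu }C_{\lambda \mu }^{j}}}$ is of minimal order $\geq 1$, and, after multiplication by $r\nu _{0}$, $h_{i}$ is of minimal order $2$.

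With these hypotheses in hand, the extension of Lemma~\ref{lemma6.5} produces an admissible-series solution $\underline{\nu _{i}}$ whose minimal order equals that of $h_{i}$, namely $2$; uniqueness among admissible series follows by substituting $\underline{\nu _{i}}=\sum _{p\geq 0}(\nu _{i})_{p}r^{p}$ into the ODE and matching powers of $r$, which gives at order $p$ a relation $(p+k_{0})(\nu _{i})_{p}=$ (an expression in the lower-order coefficients and the data), and $p+k_{0}=p+\tfrac12(n-1)>0$ for every $p\geq 0$ then determines each coefficient uniquely (and forces $(\nu _{i})_{0}=(\nu _{i})_{1}=0$). By the Cagnac lemma identifying admissible series with traces on $C_{O}^{T}$ of analytic spacetime functions, $\underline{\nu _{i}}$ is the restriction of an analytic function $\underline{N_{i}}$ on a neighbourhood of $O$, vanishing there to second order. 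I expect the main obstacle to be precisely this minimal-order bookkeeping for the Christoffel-symbol term $\overline{\underline{C_{ij}C^{\lambda \mu }C_{\lambda \mu }^{j}}}$: because restriction to $C_{O}$ does not commute with differentiation and is not injective on analytic functions, one must argue directly on the Taylor coefficients of $\underline{c_{ik}}$ that vanishing of the part of order $\leq 1$ of its restriction forces the value and gradient of $\underline{c_{ik}}$ to vanish at $O$. A secondary point to settle is that imposing~(\ref{8.13}) for every index $i$ is consistent with the algebraic constraint $y^{i}\underline{\nu _{i}}=0$ built into the definitions~(\ref{8.11}) and~(\ref{8.12}): one should check that contracting the operator in~(\ref{8.13}) with $y^{i}$ reproduces, up to lower-order terms, $r\partial _{1}(y^{i}\underline{\nu _{i}})$, so that this constraint propagates from the vertex.
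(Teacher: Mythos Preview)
Your proposal is correct and follows the same route as the paper, which dispatches the theorem in a single sentence (\textquotedblleft Recalling that $\nu_{0}-1$ is an admissible series of minimal order $2$ we see that this lemma applies to~(\ref{8.13})\textquotedblright); you have simply fleshed out the verification of the hypotheses of the extended Lemma~\ref{lemma6.5}, and your bookkeeping on minimal orders---including the Christoffel term via the observation that $\underline{\bar c_{ij}}$ having minimal order $2$ forces $\underline{c_{ij}}$ and its first derivatives to vanish at $O$---is accurate. Your secondary concern about the compatibility of~(\ref{8.13}) with the constraint $y^{i}\underline{\nu_{i}}=0$ lies outside what the theorem actually asserts: the statement is only about solvability of~(\ref{8.13}) as written, and since $\nu_{A}=\frac{\partial y^{i}}{\partial x^{A}}\underline{\nu_{i}}$ is insensitive to the radial component of $\underline{\nu_{i}}$, the question you raise does not affect the construction of the characteristic data.
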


We now prove:

\begin{theorem}
A sufficient condition for $r^{-2}\underline{\nu _{i}}$ to have an
admissible expansion is that the $C^{\omega }$ metric $C$ given by (\ref{4.3}%
) which induces the basic characteristic data be near-round at the vertex.
\end{theorem}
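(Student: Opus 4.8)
The plan is to run the same reduction that proved the preceding theorem (the existence of $\underline{\nu_i}$ as an admissible series of minimal order 2), but now one order deeper, exploiting the factor $y^0$ that near-roundness forces out of the tensor $X$. Concretely, set $\phi_i:=r^{-2}\underline{\nu_i}$ and rewrite the linear ODE system (\ref{8.13}) as an equation for $\phi_i$ by substituting $\underline{\nu_i}=r^2\phi_i$ and using $r\partial_1(r^2\phi_i)=r^2(r\partial_1\phi_i+2\phi_i)$. This produces
\begin{equation*}
r\partial_1\phi_i+\Big(2+\nu_0\big\{\tfrac{n-1}{2}+\tfrac12\overline{\underline{c^{hk}\delta_{hk}}}\big\}\Big)\phi_i-2r^{-2}\,\underline{\overline{\nu_j X_i^j}}+\tfrac12\nu_0 r^{-1}\underline{\xi_i}-\tfrac12 r^{-1}\nu_0\,\overline{\underline{C_{ij}C^{\lambda\mu}C^j_{\lambda\mu}}}=0\;,
\end{equation*}
which is of the form (\ref{8.14}) with $k_0=2+\tfrac{n-1}{2}>0$, off-diagonal part $k_i^j$ of minimal order $\geq 2$, and an inhomogeneous term whose admissibility is exactly what must be checked. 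So the whole content is: show each of the three source terms, after division by $r^2$, is still an admissible series (of minimal order $\geq 0$, in fact the argument will give minimal order $2$ again).

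The key inputs are the near-roundness consequences already derived in the excerpt just before Definition~\ref{Def} and the theorems that follow it: (i) $\underline{X_{\alpha\beta}}\equiv y^0\underline{Y_{\alpha\beta}}$ with $\underline{Y}$ analytic, hence $r^{-2}\overline{\mathrm{tr}X}\equiv\bar Z$ and $r^{-2}|\bar X|^2\equiv|\bar Y|^2$ are admissible; (ii) $\nu^0-1$, and more sharply $r^{-2}(\nu^0-1)$, are admissible (the near-round refinement theorem), so $\nu_0=(N^0)^{-1}$ and $r^{-2}(\nu_0-1)$ are admissible of minimal order $2$; (iii) $r^{-2}\underline{\nu_i}=\phi_i$ is precisely the unknown, so the term $r^{-2}\underline{\nu_j X_i^j}=\phi_j\,\underline{\overline{X_i^j}}$ and $\underline{\overline{X_i^j}}\equiv y^0\underline{\overline{Y_i^j}}\equiv r\,\underline{\overline{Y_i^j}}$ is admissible of minimal order $1$ — so this term is a bona fide $k_i^j\phi_j$ contribution with $k_i^j$ of minimal order $1$, not an inhomogeneity at all; it moves to the left side. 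For the $\xi$-term one needs $r^{-1}\underline{\xi_i}$ admissible: from Section~8.1, $\underline{\xi_i}$ solves an ODE of the type in Lemma~\ref{lemma6.5} whose inhomogeneity involves $\underline{\overline{{}^{(C)}\nabla_\alpha X_i^\alpha}}$ and $\partial_{y^j}F(\mathrm{tr}X,|X|^2)$; under near-roundness $X=y^0Y$ forces an extra power of $r$ on both, so $r^{-1}\underline{\xi_i}$ is again admissible, of minimal order $1$ (one reruns Corollary~\ref{corollary6.6} on the $\xi_i$ equation). For the last term, $\overline{\underline{C_{ij}C^{\lambda\mu}C^j_{\lambda\mu}}}\equiv\tfrac12\,\overline{\underline{C^{hk}(2\partial_h c_{ik}-\partial_i c_{hk})}}$ and near-roundness gives $\underline{c_{ij}}=y^0\underline{d_{ij}}$ with $\partial_h C_{ij}\equiv 0$ at $y^0=0$, so each $\partial c$ carries a factor $y^0$; hence $r^{-1}\cdot r^{-1}\nu_0\,\overline{\underline{C_{ij}C^{\lambda\mu}C^j_{\lambda\mu}}}$ is admissible of minimal order $0$ — actually this gives the only term that could be order $0$, so one should double-check the exact minimal order (it will in fact be $\geq 0$, which already suffices for $k_0>0$ in the extended Lemma).

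With all source terms shown admissible, the extended version of Lemma~\ref{lemma6.5} (the vector-valued Lemma stated just before the previous theorem, with diagonal constant $k_0=2+\tfrac{n-1}{2}>0$ and off-diagonal part of minimal order $\geq1$) applies verbatim and produces a unique admissible $\phi_i=r^{-2}\underline{\nu_i}$, which is what is claimed. The main obstacle is bookkeeping the minimal orders of the three inhomogeneous terms — in particular re-deriving, under near-roundness, that $r^{-1}\underline{\xi_i}$ is admissible (this requires re-entering the Section~8.1 argument with the factorization $X=y^0Y$ in hand, rather than just quoting the previous $\xi$-theorem) and confirming the Christoffel term $\overline{\underline{C_{ij}C^{\lambda\mu}C^j_{\lambda\mu}}}$ loses the expected power of $r$. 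Once those are in place the ODE manipulation and the appeal to the extended Lemma are routine, mirroring the proof of the near-round refinement for $\nu^0$.
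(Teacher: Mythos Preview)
Your proposal is correct and follows essentially the same route as the paper: first re-enter the $\xi_i$ equation with the near-roundness factorization $X=y^0Y$ to upgrade $\underline{\xi_i}$ to $r^{-1}\underline{\xi_i}$ admissible (via Corollary~\ref{corollary6.6}), then rewrite the $\underline{\nu_i}$ equation with $c_{ij}=y^0d_{ij}$ so that the off-diagonal coupling becomes $r^2\overline{\underline{Y_i^j}}$ and the Christoffel source picks up an extra $r$, and finally invoke the vector extension of Lemma~\ref{lemma6.5} with shifted $k_0$. The only cosmetic difference is that you carry out the substitution $\phi_i=r^{-2}\underline{\nu_i}$ explicitly, whereas the paper phrases the same step as ``an extension of Lemma~8.1''; your minimal-order bookkeeping on $\overline{\underline{X_i^j}}$ is slightly loose (it is order $2$, not $1$) but this only helps.
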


\begin{proof}
Using the relation between $X$ and $Y$ the linearity of $F$ in $\mathrm{tr}X$
and $|X|^{2}$ we see that for $C$ round at the vertex the equation satisfied
by $\underline{\xi_{j}}$ reads
\begin{equation*}
\frac{1}{2}\{r\frac{\partial }{\partial r}\underline{\xi_{j}}+\underline{%
\xi_{j}}(n+r^{2}\bar{Z})\}=h_{j},
\end{equation*}
with
\begin{equation*}
h_{j}:=\underline{\overline{^{(C)}\nabla_{\alpha }((y^{0})^{2}Y_{j}^{\alpha
})}}-\underline{\overline{\frac{\partial }{\partial y^{j}}F((y^{0})^{2}%
\mathrm{tr}Y,(y^{0})^{4}|Y|^{2})}} \; .
\end{equation*}
We deduce from the linearity of $F$ in $\mathrm{tr}X$ and $|X|^{2}$ that $%
r^{-1}h_{j}$ admits an admissible expansion, the same holds therefore (see
corollary \ref{corollary6.6}) for $r^{-1}\underline{\xi_{j}}$. The equation
satisfied by $\underline{\nu _{j}}$ reads
\begin{equation}
r\partial _{1}\underline{\nu _{i}}+\nu _{0}\{\frac{n-1}{2}+\overline{%
\underline{r^{2}d^{hk}\delta_{hk}}}\}\underline{\nu _{i}}-2\nu _{j}r^{2}%
\underline{\overline{Y_{i}^{j}}}=h_{i}\;,  \label{8.15}
\end{equation}
\begin{equation*}
h_{i}=-\frac{1}{2}\nu _{0}r\underline{\xi _{i}}+\frac{1}{4}r^{3}\nu _{0}%
\overline{\underline{C^{hk}(\partial_{h}d_{ik}+\partial_{h}d_{ik}
-\partial_{i}d_{hk})}}\;.
\end{equation*}
An extension of Lemma 8.1 shows that $r^{-2}\underline{\nu
_{i}}$ admits an admissible expansion because it is so of
$h_{i}$.
\end{proof}

\section{The $\mathcal{C}_{0}$ constraint}

The last unknown in $\bar{g}$, only unknown in the constraint $\mathcal{C}%
_{0}$, is
\begin{equation*}
\bar{g}_{00}\equiv \underline{\bar{g}_{00}}.
\end{equation*}
The constraint $\mathcal{C}_{0}$ has a simpler expression in terms of $\bar{g%
}^{11}$. Since $\bar{g}^{11}$ is linked to $\bar{g}_{00}$ by the identity
\begin{equation*}
\bar{g}^{01}\bar{g}_{00}+\bar{g}^{11}\bar{g}_{10}+\bar{g}^{A1}\bar{g}_{A0}=0,
\end{equation*}
we have
\begin{equation}
\bar{g}_{00}\equiv -\bar{g}^{11}(\nu_{0})^{2}+\bar{g}^{AB}\nu_{B}\nu_{A}
\equiv -\bar{g}^{11}(\nu_{0})^{2}+\underline{\bar{C}^{ij}} \underline{\nu_{i}%
}\underline{\nu_{j}} .  \label{9.1}
\end{equation}

We have seen\footnote{%
See I.} that the $\mathcal{C}_{0}$ constraint can be written in vacuum as
\begin{equation}
\partial_{1}\zeta +(\kappa +\tau )\zeta +\frac{1}{2}\{\partial_{1}\bar{W}%
^{1} +(\kappa +\tau )\bar{W}^{1}+\tilde{R}-\frac{1}{2}\bar{g}%
^{AB}\xi_{A}\xi_{B} +\bar{g}^{AB}\tilde{\nabla}_{A}\xi_{B}\}=0,  \label{9.2}
\end{equation}
with
\begin{equation}
\zeta :=(\partial_{1}+\kappa +\frac{1}{2}\tau )\bar{g}^{11}+\frac{1}{2}\bar{W%
}^{1},  \label{9.3}
\end{equation}
\begin{equation}
\kappa \equiv \nu ^{0}\partial_{1}\nu_{0}-\frac{1}{2}(\bar{W}_{1}+\tau ),
\quad \bar{W}_{1}\equiv \nu_{0}\bar{W}^{0}, \quad \bar{W}^{0} \equiv \bar{W}%
^{1} \equiv -r\bar{g}^{AB}s_{AB}.  \label{9.4}
\end{equation}

\subsection{Equation for $\protect\zeta$}

In the flat case it holds that
\begin{equation*}
\nu _{0,\eta }=1,\quad \tau _{\eta }=-\bar{W}_{1,\eta }=\frac{n-1}{r},\quad
\kappa _{\eta }=0.
\end{equation*}%
The function $\zeta $ reduces to
\begin{equation}
\zeta _{\eta }:=\partial _{1}\bar{g}^{11}+\frac{1}{2}\tau _{\eta }(\bar{g}%
^{11}-1),  \label{9.5}
\end{equation}%
and the equation for $\zeta $ reads, using $\xi _{A,\eta }=0$,
\begin{equation*}
\partial _{1}\zeta _{\eta }+\frac{n-1}{r}\zeta _{\eta }+\frac{1}{2}\{\frac{%
n-1}{r^{2}}-\frac{(n-1)^{2}}{r^{2}}+\tilde{R}_{\eta }\}=0.
\end{equation*}%
That is, using the scalar curvature of the $S^{n-1}$ round sphere of radius $%
r$ which is\footnote{
See for instance \cite[p.~140]{ChBdWMII}.}
\begin{equation}
\tilde{R}_{\eta }=r^{-2}(n-2)(n-1),  \label{9.6}
\end{equation}%
the equation
\begin{equation*}
\partial _{1}\zeta _{\eta }+\frac{n-1}{r}\zeta _{\eta }=0
\end{equation*}%
with only bounded solution $\zeta _{\eta }\equiv 0$. From (\ref{9.5})
results then $\bar{g}_{\eta }^{11}\equiv 1$. We now study the general case.

We can write as follows the equation to be satisfied by $\zeta
$,
\begin{equation}
r\partial _{1}(r\zeta )+^{(r\zeta )}k\,r\zeta +^{(r\zeta )}h=0,  \label{9.7}
\end{equation}
\begin{equation}
^{(r\zeta )}k:=r(\kappa +\tau )-1\equiv r\{\nu ^{0}\partial _{1}\nu _{0}+%
\frac{1}{2}(\tau -\overline{W_{1}})\}-1,  \label{9.8}
\end{equation}%
\begin{equation*}
^{(r\zeta )}h:=\frac{r^{2}}{2}\{\partial _{1}\bar{W}^{1}+(\kappa +\tau )\bar{%
W}^{1}+\tilde{R}-\frac{1}{2}\bar{g}^{AB}\xi _{A}\xi _{B}+\bar{g}^{AB}\tilde{%
\nabla}_{A}\xi _{B}\}.
\end{equation*}%
Hence
\begin{equation}
^{(r\zeta )}h:=\frac{r^{2}}{2}\{\partial _{1}\bar{W}^{1}+\frac{1}{2}(\tau -%
\bar{W}_{1})\bar{W}^{1}+\tilde{R}+\nu ^{0}\partial _{1}\nu _{0}\bar{W}^{1}-%
\frac{1}{2}\bar{g}^{AB}\xi _{A}\xi _{B}+\bar{g}^{AB}\tilde{\nabla}_{A}\xi
_{B}\}.  \label{9.9}
\end{equation}%
We have shown that $\nu ^{0}-1$ and $r\partial _{1}\nu _{0}$, hence also $%
r\nu ^{0}\partial _{1}\nu _{0}$, admit admissible expansions of minimal
order 2, and we have seen that $r\tau $ and $r\overline{W_{1}}$ are
admissible series with terms of order zero respectively $(n-1)$ and $-(n-1)$%
. Hence $k$ is an admissible series of zero order term $n-2$,
and we have
\begin{equation*}
^{(r\zeta )}k=n-2+k_{1},
\end{equation*}%
with $k_{1}$ an admissible series of minimal order 1.

We study the terms appearing in $^{(r\zeta)}h$.

\begin{itemize}
\item
\begin{equation}
r^{2}\nu^{0}\partial_{1}\nu_{0}\bar{W}^{1}\equiv
r\nu^{0}\partial_{1}\nu_{0}\, r \bar{W}^{1}  \label{9.10}
\end{equation}
has an admissible expansion of minimal order 2 (see lemma \ref{6.4})

\item
\begin{equation}
r^{2}\bar{g}^{AB}\xi_{A}\xi_{B}\equiv r^{2}\underline{\bar{C}^{ij}}%
\underline{\xi_{i}}\underline{\xi_{j}}  \label{9.11}
\end{equation}
has an admissible expansion of order 4 because $\underline{\xi_{i}}$ has an
admissible expansion of order 1.

\item
\begin{equation*}
\bar{g}^{AB}\tilde{\nabla}_{A}\xi_{B}\equiv \bar{C}^{AB}\tilde{\nabla}%
_{A}\xi_{B},
\end{equation*}
the Christoffel symbols $\tilde{C}_{BC}^{A}$ of the Riemannian connection $%
\tilde{\nabla}$ are equal (recall that $C^{B0}=C^{B1}=C^{00}=0)$ to the
trace on $C_{O}$ of the Christoffel symbols with the same indices of the
spacetime metric $C$, hence, denoting by $^{(C)}\nabla$ the covariant
derivative in the metric $C$
\begin{equation*}
\tilde{\nabla}_{A}\xi_{B}\equiv \overline{^{(C)}\nabla_{A}\Xi_{B}}.
\end{equation*}
Since the $\Xi_{B}$ are the only non vanishing components of the vector $%
\Xi, $ and due to the form chosen for the metric $C$ we find
\begin{equation}
\bar{C}^{AB}\overline{^{(C)}\nabla_{A}\Xi_{B}}= \underline{\overline{%
C^{\alpha\beta}\; {}^{(C)}\nabla_{\alpha}\Xi_{\beta}}} ,  \label{9.12}
\end{equation}
hence the scalar $r^{2}\bar{C}^{AB}\tilde{\nabla}_{A}\xi_{B}$ has an
admissible expansion of minimal order 2.

\item We have seen that in the flat case
\begin{equation}
\bar{W}_{\eta }^{1}\equiv \eta ^{\alpha \beta }\hat{\Gamma}_{\alpha \beta
}^{1}\equiv -\frac{n-1}{r}  \label{9.13}
\end{equation}%
and
\begin{equation}
\partial _{1}\bar{W}_{\eta }^{1}+\tau \bar{W}_{\eta }^{1}\equiv \{\frac{n-1}{%
r^{2}}-\frac{(n-1)^{2}}{r^{2}}\}=-\tilde{R}_{\eta }\;.  \label{9.14}
\end{equation}%
In the general case we compute
\begin{equation*}
\partial _{1}\bar{W}^{1}+\frac{1}{2}(\tau -\bar{W}_{1})\bar{W}^{1}.
\end{equation*}%
Recall that
\begin{equation*}
\tau \equiv \frac{n-1}{r}+\frac{\overline{\mathrm{tr}X}}{r};
\end{equation*}%
set
\begin{equation*}
\bar{W}^{1}\equiv \bar{W}_{\eta }^{1}+F,\quad \text{with}\quad F:=(\bar{g}%
^{\alpha \beta }-\eta ^{\alpha \beta })\hat{\Gamma}_{\alpha \beta }^{1}.
\end{equation*}%
Using the values of the Christoffel symbols $\hat{\Gamma}_{\alpha \beta }^{1}
$ and the components of $g$ and $\eta $ in $x$ coordinates we find
\begin{eqnarray*}
F &\equiv &-(\bar{C}^{AB}-\eta ^{AB})x^{1}s_{AB}\equiv -\frac{1}{r}(\bar{C}%
^{\alpha \beta }-\eta ^{\alpha \beta })\eta _{\alpha \beta } \\
&\equiv &\frac{1}{r}(n+1-\bar{C}^{\alpha \beta }\eta _{\alpha \beta })\equiv
\frac{1}{r}(n+1-\underline{\bar{C}^{\alpha \beta }\eta _{\alpha \beta }}) \\
&\equiv &\frac{1}{r}(n-\underline{\bar{C}^{ij}\eta _{ij}})\equiv -\frac{1}{r}%
\underline{\bar{c}^{ij}}\delta _{ij},
\end{eqnarray*}%
hence
\begin{equation*}
\bar{W}^{1}\equiv -\frac{n-1}{r}-\frac{\underline{\bar{c}^{ij}}\delta _{ij}}{%
r},\quad \partial _{1}\bar{W}^{1}\equiv \frac{n-1}{r^{2}}+\frac{(\underline{%
\bar{c}^{ij}}-y^{h}\underline{\partial _{h}\bar{c}^{ij}})\delta _{ij}}{r^{2}},
\end{equation*}%
and
\begin{equation*}
\frac{1}{2}(\tau -\bar{W}^{1})\bar{W}^{1}\equiv -\{\frac{n-1}{r}+\frac{%
\overline{\mathrm{tr}X}+\underline{\bar{c}^{ij}}\delta _{ij}}{2r}\}\{\frac{%
n-1}{r}+\frac{1}{r}\underline{\bar{c}^{ij}}\delta _{ij}\}.
\end{equation*}%
Using the value of the scalar curvature $\tilde{R}_{\eta }$ of the round
sphere $S^{n-1}$ of radius $r$ we find that
\begin{equation*}
r^{2}\{\partial _{1}\bar{W}^{1}+\frac{1}{2}(\tau -\bar{W}^{1})\bar{W}%
^{1}\}\equiv -r^{2}\tilde{R}_{\eta }+\Phi,
\end{equation*}%
where $\Phi $ is an admissible expansion of minimal order 2,
\begin{equation*}
\Phi :=-y^{h}\underline{\partial _{h}\bar{c}^{ij}}\delta _{ij}-(n-2)%
\underline{\bar{c}^{ij}}\delta _{ij}-\frac{1}{2}(n-1+\underline{\bar{c}%
^{hk}\delta _{hk}})(\overline{\mathrm{tr}X}+\underline{\bar{c}^{ij}\delta
_{ij}}).
\end{equation*}

\item To compute $r^{2}\tilde{R}$ we use formulas given in I. The formulas
(10.33) and (10.37) of I for a general metric in null adapted coordinates
are
\begin{align*}
\bar{g}^{AB}\bar{R}_{AB}\equiv &\ 2(\partial_{1}+\bar{\Gamma}_{11}^{1}+\tau) %
\left[ (\partial_{1}+\bar{\Gamma}_{11}^{1}+\frac{\tau}{2}) \bar{g}^{11} +
\bar{\Gamma}^{1}\right] \\
&\ +\tilde{R}-2\bar{g}^{AB}\bar{\Gamma}_{1A}^{1}\bar{\Gamma}_{1B}^{1} -2\bar{%
g}^{AB}\tilde{\nabla}_{A}\bar{\Gamma}_{1B}^{1},
\end{align*}
\begin{equation*}
\bar{R}_{11}\equiv -\partial_{1}\tau +\bar{\Gamma}_{11}^{1}\tau
-\chi_{A}^{B}\chi_{B}^{A}
\end{equation*}
and
\begin{equation*}
\bar{S}_{01}\equiv -\frac{1}{2}\nu_{0}\bar{g}^{AB}\bar{R}_{AB}+ \bar{R}%
_{1A}\nu^{A}-\frac{1}{2}\nu_{0}\bar{g}^{11}\bar{R}_{11}.
\end{equation*}
\end{itemize}

In the case of the metric $C$ it holds that $C_{01}=1$, $C_{0A}=0$, $%
C_{00}=-1$, $C^{11}=1$ hence
\begin{equation*}
^{(C)}\Gamma_{11}^{1}\equiv ^{(C)}\Gamma_{1A}^{1}\equiv 0, \quad
^{(C)}\Gamma^{1}\equiv -\frac{1}{2}(C^{AB}\partial_{1}C_{AB}+C^{AB}%
\partial_{0}C_{AB}) ,
\end{equation*}
and the above formulas reduce to (recall that $\tilde{R}\equiv {}^{(C)}%
\tilde{R}$):
\begin{equation*}
\bar{C}^{AB}\;{}^{(C)}\bar{R}_{AB}\equiv -(\partial_{1}+\tau )\left[\tau+%
\bar{C}^{AB}\overline{\partial_{0}C_{AB}}\right] +\tilde{R} ,
\end{equation*}
\begin{equation*}
\bar{R}_{11}\equiv -\partial_{1}\tau -\chi_{A}^{B}\chi_{B}^{A},
\end{equation*}
and
\begin{equation*}
-2 \;{}^{(C)}\bar{S}_{01}\equiv \bar{g}^{AB}\bar{R}_{AB}+\bar{R}_{11},
\end{equation*}
from which we deduce
\begin{equation*}
\tilde{R}\equiv 2\partial_{1}\tau +\tau^{2}+(\partial_{1}+\tau)\bar{C}^{AB}
\overline{\partial_{0}C_{AB}}+\chi_{A}^{B}\chi_{B}^{A}-2\;{}^{(C)} \overline{%
S}_{01}.
\end{equation*}
We have
\begin{equation*}
^{(C)}\bar{S}_{01}\equiv -({}^{(C)}\underline{\bar{S}_{00}}
+r^{-1}y^{i}\;{}^{(C)}\underline{\bar{S}_{0i}}) .
\end{equation*}
Since $C$ is an analytic metric in a neighbourhood of $O$, $^{(C)}\underline{%
\bar{S}_{\alpha\beta}}$ admit admissible expansions and hence $r^2\,{}^{(C)}%
\bar{S}_{01}$ also admits an admissible expansion of minimal order 2.

Recall that
\begin{equation*}
\tau \equiv \frac{n-1}{r}+\frac{\overline{\mathrm{tr}X}}{r} \quad \text{and}
\quad |\chi|^{2}=\frac{|\bar{X}|^{2}+2\,\overline{\mathrm{tr}X}+n-1}{r^{2}} ,
\end{equation*}
hence
\begin{equation*}
r^{2}\{2\partial_{1}\tau +\tau^{2}+\chi_{A}^{B}\chi_{B}^{A}\} \equiv
(n-1)(n-2)+2(n-1)\overline{\mathrm{tr}X}+2r\partial_{1}\overline{\mathrm{tr}X%
} +(\overline{\mathrm{tr}X}\text{)}^{2}+|\bar{X}|^{2}.
\end{equation*}

Finally we remark that $\partial_{0}C_{AB}$ are the non vanishing components
of the Lie derivative of the metric $C$ with respect to the vector $m$ with $%
x$ components $m^{0}=1$, $m^{1}=m^{A}=0$ hence with $y$ components $%
\underline{m^{0}}=-1$, $\underline{m^{i}}=0$ that
\begin{equation*}
C^{AB}\partial_{0}C_{AB}\equiv C^{\alpha\beta}\mathcal{L}_{m}C_{\alpha\beta}%
\equiv \underline{C^{\alpha\beta}\mathcal{\partial}_{0}C_{\alpha\beta}} ,
\end{equation*}
hence
\begin{equation*}
\bar{C}^{AB}\overline{\partial_{0}C_{AB}}\equiv \overline{\underline{C^{ij}}}
\overline{\underline{\mathcal{\partial }_{0}C_{ij}}}\equiv \overline{%
\underline{C^{ij}}} \overline{\underline{\mathcal{\partial }_{0}c_{ij}}}
\end{equation*}
has an admissible expansion of minimal order 1 and $r^{2}(\partial_{1}+\tau)%
\bar{C}^{AB}\overline{\partial_{0}C_{AB}}$ an admissible expansion of
minimal order 2.

We have proved that
\begin{equation*}
r^{2}\tilde{R}\equiv r^{2}\tilde{R}_{\eta }+\Psi ,
\end{equation*}
where
\begin{equation*}
\Psi \equiv 2(n-1)\overline{\mathrm{tr}X} +2r\partial_{1}\overline{\mathrm{tr%
}X} +(\overline{\mathrm{tr}X})^{2} +|\bar{X}|^{2} +r^{2}(\partial_{1}+\tau)%
\overline{\underline{C^{ij}}} \overline{\underline{\mathcal{\partial}%
_{0}c_{ij}}}
\end{equation*}
has an admissible expansion of minimal order 2. Hence
\begin{equation*}
r^{2}\{\partial_{1}\bar{W}^{1}+\frac{1}{2}(\tau -\bar{W}^{1})\bar{W}^{1}+
\tilde{R}\}\equiv \Phi +\Psi .
\end{equation*}

\emph{In conclusion we have shown that $^{(r\zeta)}h$ has an admissible
expansion of minimal order 2, the same is therefore true of $r\zeta.$}

\subsection{Equation for $\protect\alpha :=\bar{g}^{11}-1$}

The definition of $\zeta $ gives for $\alpha :=\bar{g}^{11}-1$ an equation
which reads
\begin{equation}
r\partial_{1}\alpha +r(\kappa+\frac{1}{2}\tau) \alpha +\frac{r}{2}(2\kappa +
\tau +\bar{W}_{1}-2\zeta )=0.  \label{9.15}
\end{equation}

\begin{theorem}
If the basic characteristic data is induced in a neighbourhood
of $O$ by an analytic metric $C$ satisfying (4.3)-(4.4), then
the equation (\ref{9.15}) admits in this neighbourhood one and
only one \emph{solution} $\alpha $ \emph{which is an admissible
series of minimal order 2.}

This implies that $\bar{g}_{00}+1$ is also an admissible series of minimal
order 2.
\end{theorem}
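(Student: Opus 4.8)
The plan is to recognise~(\ref{9.15}) as an equation of the form~(\ref{6.4}) and then invoke Lemma~\ref{lemma6.5}. I would first rewrite~(\ref{9.15}) as
\begin{equation*}
r\partial_{1}\alpha+k\,\alpha=-h,\qquad k:=r(\kappa+\tfrac{1}{2}\tau),\qquad h:=\tfrac{r}{2}(2\kappa+\tau+\bar{W}_{1}-2\zeta),
\end{equation*}
and simplify $k$ and $h$ with the definition~(\ref{9.4}) of $\kappa$, namely $\kappa\equiv\nu^{0}\partial_{1}\nu_{0}-\tfrac{1}{2}(\bar{W}_{1}+\tau)$. The cancellations are immediate: $\kappa+\tfrac{1}{2}\tau\equiv\nu^{0}\partial_{1}\nu_{0}-\tfrac{1}{2}\bar{W}_{1}$ and $2\kappa+\tau+\bar{W}_{1}\equiv2\nu^{0}\partial_{1}\nu_{0}$, so that
\begin{equation*}
k\equiv r\nu^{0}\partial_{1}\nu_{0}-\tfrac{1}{2}r\bar{W}_{1},\qquad h\equiv r\nu^{0}\partial_{1}\nu_{0}-r\zeta.
\end{equation*}

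Next I would check the hypotheses of Lemma~\ref{lemma6.5}. It was recalled in the preceding subsection that $r\nu^{0}\partial_{1}\nu_{0}$ is an admissible series of minimal order~$2$, and the main computation there established that $r\zeta$ is likewise admissible of minimal order~$2$; hence the source $h$ is an admissible series of minimal order~$2$. For the coefficient, I would use $\bar{W}_{1}\equiv\nu_{0}\bar{W}^{1}$ together with $r\bar{W}^{1}\equiv-(n-1)-\underline{\bar{c}^{ij}}\delta_{ij}$ and $\nu_{0}=\bar{N}_{0}$, $N_{0}(O)=1$, to see that $r\bar{W}_{1}$ is an admissible series with constant term $-(n-1)$; since $r\nu^{0}\partial_{1}\nu_{0}$ has minimal order~$2$, it follows that $k$ is an admissible series with constant term $k_{0}=\tfrac{n-1}{2}$, which is strictly positive because $n\geq2$. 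Lemma~\ref{lemma6.5} then applies, with $q_{h}=2\geq1$ and $k_{0}\geq0$, and gives in a neighbourhood of~$O$ a unique admissible solution $\alpha$, of minimal order~$2$.

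Finally I would transfer this to $\bar{g}_{00}$ by means of the identity~(\ref{9.1}), $\bar{g}_{00}\equiv-\bar{g}^{11}(\nu_{0})^{2}+\underline{\bar{C}^{ij}}\,\underline{\nu_{i}}\,\underline{\nu_{j}}$. Writing $\bar{g}^{11}=1+\alpha$ and expanding, $1-\bar{g}^{11}(\nu_{0})^{2}$ is a finite sum of products of $\alpha$ and of $\nu_{0}-1$, each of minimal order~$2$, hence admissible of minimal order at least~$2$; and $\underline{\bar{C}^{ij}}\,\underline{\nu_{i}}\,\underline{\nu_{j}}$ is admissible of minimal order~$4$ because $\underline{\nu_{i}}$ is of minimal order~$2$. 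Therefore $\bar{g}_{00}+1$ is an admissible series of minimal order~$2$. I expect no essential obstacle: everything analytically substantial---the admissibility, of minimal order~$2$, of $r\zeta$ and of $r\nu^{0}\partial_{1}\nu_{0}$, together with the previously obtained expressions for $\underline{\nu_{i}}$---is already in hand, so the argument reduces to bookkeeping. The one point I would be careful about is the sign of $k_{0}=\tfrac{n-1}{2}$, since it is precisely the non-negativity of the constant term that makes Lemma~\ref{lemma6.5} applicable and rules out the kind of obstruction displayed in the Remark after Corollary~\ref{corollary6.6}.
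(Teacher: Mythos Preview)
Your proof is correct and follows essentially the same route as the paper: you perform the same simplification via~(\ref{9.4}) to reach $k\equiv r\nu^{0}\partial_{1}\nu_{0}-\tfrac{1}{2}r\bar{W}_{1}$ and $h\equiv r(\nu^{0}\partial_{1}\nu_{0}-\zeta)$, identify $k_{0}=\tfrac{n-1}{2}$ and $q_{h}=2$, apply Lemma~\ref{lemma6.5}, and then read off the conclusion for $\bar{g}_{00}+1$ from~(\ref{9.1}). Your treatment of the last step is in fact slightly more explicit than the paper's, which simply states that~(\ref{9.1}) ``shows the property''.
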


\begin{proof}
Using the definition of $\kappa $ in (\ref{9.4}), the equation (\ref{9.15})
reads
\begin{equation*}
r\partial _{1}\alpha +r(\kappa +\frac{1}{2}\tau )\alpha +r(\nu ^{0}\partial
_{1}\nu _{0}-\zeta )=0,
\end{equation*}%
that is,
\begin{equation}
r\partial _{1}\alpha +{}^{(\alpha )}k\alpha +{}^{(\alpha )}h=0,  \label{9.16}
\end{equation}%
with
\begin{equation}
{}^{(\alpha )}k:=r(\nu ^{0}\partial _{1}\nu _{0}-\frac{1}{2}\bar{W}%
_{1}),\qquad {}^{(\alpha )}h:=r(\nu ^{0}\partial _{1}\nu _{0}-\zeta ).
\label{9.17}
\end{equation}%
Previous results show that this equation is of a form to which Lemma \ref%
{lemma6.5} applies with $^{(\alpha )}k_{0}=\frac{n-1}{2}$ and $^{(\alpha )}h$
of minimal order 2. It has therefore a solution $\alpha $, admissible series
of minimal order 2.

The identity (\ref{9.1}) shows the property of $\bar{g}_{00}+1$.
\end{proof}

\begin{theorem}
If in addition to the hypothesis of the previous theorem the given metric $C$
is near-round at the vertex, then $r^{-2}(\bar{g}_{00}+1)$ is an admissible
series in a neighbourhood of the vertex.
\end{theorem}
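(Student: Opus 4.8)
The plan is to peel the claim down, via the purely algebraic identity (\ref{9.1}), to a single statement about $\alpha:=\bar g^{11}-1$. Writing $\bar g^{11}=1+\alpha$ in (\ref{9.1}) gives
\begin{equation*}
r^{-2}(\bar g_{00}+1)\equiv -\,r^{-2}\bigl((\nu_0)^2-1\bigr)\;-\;(r^{-2}\alpha)\,(\nu_0)^2\;+\;\underline{\bar C^{ij}}\,(r^{-1}\underline{\nu_i})\,(r^{-1}\underline{\nu_j})\;.
\end{equation*}
Here $(\nu_0)^2$, $\nu_0+1$ and $\underline{\bar C^{ij}}$ are admissible series of minimal order $0$. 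Since $\nu_0-1\equiv -(\nu^0-1)\,\nu_0$ and $r^{-2}(\nu^0-1)$ is admissible by the near-roundness theorem for $\nu^0$ of Section~7, also $r^{-2}(\nu_0-1)$, and hence $r^{-2}((\nu_0)^2-1)=r^{-2}(\nu_0-1)\,(\nu_0+1)$, are admissible; and $r^{-1}\underline{\nu_i}=r\cdot r^{-2}\underline{\nu_i}$ is admissible by the near-roundness theorem for $\underline{\nu_i}$ of Section~8. So the whole matter reduces to proving that $r^{-2}\alpha$ is an admissible series, after which (\ref{9.1}) finishes the proof exactly as in the previous theorem. It should be stressed that this is the genuinely new content: $\alpha$ (and $\bar g_{00}+1$) are already known to be admissible of minimal order $2$, but that is strictly weaker than being $r^{2}$ times an admissible series, which is what must be established.

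To control $r^{-2}\alpha$ I would run the same scheme that produced the near-round refinements of Sections~7, 8 and~9.2. By (\ref{9.16}) the function $\alpha$ solves an equation of the form (\ref{6.4}) with $k_0={}^{(\alpha)}k_0=\tfrac{n-1}{2}$; since $2+k_0>0$, Corollary~\ref{corollary6.6} reduces the task to showing that $r^{-2}\,{}^{(\alpha)}h$ is admissible. By (\ref{9.17}), $r^{-2}\,{}^{(\alpha)}h\equiv r^{-1}\nu^0\partial_1\nu_0-r^{-1}\zeta$. The first term is admissible: writing $\nu_0-1=r^{2}\psi$ with $\psi:=r^{-2}(\nu_0-1)$ admissible, one has $\partial_1\nu_0=2r\psi+r^{2}\partial_1\psi$, hence $r^{-1}\nu^0\partial_1\nu_0\equiv\nu^0\,(2\psi+r\partial_1\psi)$, which is admissible because $r\partial_1\psi$ is (property~4 of Section~6). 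There remains the second term: I must show $r^{-1}\zeta$, equivalently $r^{-2}(r\zeta)$, is admissible. But by (\ref{9.7}) the function $r\zeta$ solves an equation of the form (\ref{6.4}) with $k_0={}^{(r\zeta)}k_0=n-2$, and $2+k_0=n>0$, so Corollary~\ref{corollary6.6} reduces this in turn to showing that $r^{-2}\,{}^{(r\zeta)}h$ is an admissible series. Note that $^{(r\zeta)}h$ is built entirely out of quantities ($\bar W^1$, $\tilde R$, $\xi_A$, $\nu_0$, $\nu^0$, $\bar g_{AB}\equiv\bar C_{AB}$) that do not involve $\bar g^{11}$, so no circularity arises.

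The step ``$r^{-2}\,{}^{(r\zeta)}h$ admissible'' is where the real work lies, and it is the one I expect to be the main obstacle. One has to revisit the computation of $^{(r\zeta)}h$ in Section~9.1 and verify that, \emph{under the near-roundness hypothesis}, $^{(r\zeta)}h$ is not merely admissible of minimal order $2$ but is in fact $r^{2}$ times an admissible series --- which, as stressed above, does not follow formally and requires redoing the estimates term by term. The structural inputs that make this possible are: (i) near-roundness gives $C-\eta\equiv y^{0}\times(\text{analytic})$, hence in $y$ coordinates $\underline{X_{\alpha\beta}}\equiv y^{0}\underline{Y_{\alpha\beta}}$, so that $r^{-2}\overline{\mathrm{tr}X}\equiv\bar Z$ and $r^{-2}|\bar X|^{2}\equiv|\bar Y|^{2}$ are admissible; (ii) from the proof of the Section~8 near-roundness theorem, $r^{-1}\underline{\xi_i}$ is admissible; (iii) both clauses of Definition~\ref{Def}, together with (\ref{4.4}), force $\underline{\bar c_{ij}}$, $\underline{\bar c^{ij}}$ and the trace $\overline{\underline{c^{ij}\delta_{ij}}}$ to carry the requisite extra factors of $r$. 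With these, I would check (\ref{9.9}) summand by summand: the term $r^{2}\nu^0\partial_1\nu_0\bar W^1\equiv r^{2}\,(r^{-1}\nu^0\partial_1\nu_0)(r\bar W^1)$ is $r^{2}$ times an admissible series because both factors just named are admissible; $r^{2}\bar g^{AB}\xi_A\xi_B\equiv r^{2}\cdot r^{2}\,\underline{\bar C^{ij}}(r^{-1}\underline{\xi_i})(r^{-1}\underline{\xi_j})$ likewise; and $\bar g^{AB}\tilde\nabla_A\xi_B\equiv\overline{\underline{C^{\alpha\beta}\,{}^{(C)}\nabla_\alpha\Xi_\beta}}$ is the restriction of an analytic spacetime scalar, hence admissible outright. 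The delicate summands are the curvature remainders $\Phi$ and $\Psi$ (and, inside $\Psi$, the term $r^{2}(\partial_1+\tau)\bar C^{AB}\overline{\partial_0 C_{AB}}$): here one must use the near-round structure to see that, after the flat-space cancellation that produced $\Phi+\Psi$, each surviving term still keeps an extractable factor $r^{2}$ rather than only landing in minimal order $2$. Once $r^{-2}\,{}^{(r\zeta)}h$ is seen to be admissible, Corollary~\ref{corollary6.6} gives in succession that $r^{-1}\zeta$ is admissible, then $r^{-2}\,{}^{(\alpha)}h$, then $r^{-2}\alpha$; and the first paragraph then yields that $r^{-2}(\bar g_{00}+1)$ is an admissible series in a neighbourhood of $O$.
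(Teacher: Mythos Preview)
Your overall strategy is exactly the paper's: reduce via (\ref{9.1}) to $r^{-2}\alpha$, use Corollary~\ref{corollary6.6} on (\ref{9.16}) to reduce to $r^{-2}\,{}^{(\alpha)}h$, then to $r^{-2}(r\zeta)$ via (\ref{9.7}), then to $r^{-2}\,{}^{(r\zeta)}h$, and finally verify the latter term by term under near-roundness. Your handling of the $\nu_0$--terms, of $\bar g^{AB}\xi_A\xi_B$ and of $\bar g^{AB}\tilde\nabla_A\xi_B$ matches the paper's.

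There is, however, one genuine omission in your term-by-term check. Inside $r^{2}\tilde R$ there is, beyond $\Phi$, $\Psi$ and $r^{2}(\partial_1+\tau)\bar C^{AB}\overline{\partial_0 C_{AB}}$, the contribution $-2r^{2}\,{}^{(C)}\bar S_{01}$ coming from the identity for $\tilde R$ derived in Section~9.1. After dividing by $r^{2}$ you need $^{(C)}\bar S_{01}$ itself to be admissible. But
\[
^{(C)}\bar S_{01}\equiv -\bigl({}^{(C)}\underline{\bar S_{00}}+r^{-1}y^{i}\,{}^{(C)}\underline{\bar S_{0i}}\bigr),
\]
and the second summand is $\Theta^{i}$ contracted with an admissible series; this is \emph{not} automatically admissible and is not covered by your structural inputs (i)--(iii). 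The paper closes this by writing $^{(C)}\underline{S_{0i}}$ via the second fundamental form $^{(C)}K_{ij}=-\tfrac12\,\underline{\partial_0 C_{ij}}$ as $\partial_i\mathrm{tr}\,{}^{(C)}K-{}^{(C)}\nabla_j(C^{jh}\,{}^{(C)}K_{ih})$, and then using $y^{i}\underline{c_{ih}}=0$ together with the near-round form $\underline{c_{ij}}=y^{0}\underline{d_{ij}}$ (so that the Christoffel symbols $\underline{C^{k}_{ji}}$ carry an explicit $y^{0}$ and $\delta^{jh}\underline{c_{jh}}=(y^{0})^{2}Z$) to show that $r^{-1}y^{i}\,{}^{(C)}\underline{\bar S_{0i}}$ is admissible. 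This is the one place where the ``extractable $r^{2}$'' does not fall out of near-roundness by the mechanisms you list, and your sketch should include it.
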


\begin{proof}
The result will follow from the proof that $^{(\alpha)}h$ given in (\ref%
{9.17}) is such that $r^{-2}{}^{(\alpha)}h$ is an admissible series. By
previous results, it remains only to prove that $r^{-2}(r\zeta)$ has an
admissible expansion, hence that it is so of $r^{-2}{}^{(r\zeta)}h$. We have
\begin{equation}
r^{-2(r\zeta)}h:=\frac{1}{2}\{ r^{-2}(\Phi+\Psi) +\nu^{0}\partial_{1}\nu_{0}%
\bar{W}^{1} -\frac{1}{2}\bar{g}^{AB}\xi_{A}\xi_{B} +\bar{g}^{AB}\tilde{\nabla%
}_{A}\xi_{B} \}.  \label{9.18}
\end{equation}
We check that, if $C$ is near-round at the vertex the various terms, studied
above, are such that the required condition is satisfied. Indeed
\begin{equation*}
\nu^{0}\partial_{1}\nu_{0}\bar{W}^{1}\equiv
\nu^{0}r^{-1}\partial_{1}\nu_{0}\, r \bar{W}^{1}
\end{equation*}
has an admissible expansion because it is so of $r^{-1}\partial_{1}\nu_{0}$
and $r\bar{W}^{1}$.
\begin{equation*}
\bar{g}^{AB}\xi_{A}\xi_{B}\equiv \underline{\bar{C}^{ij}}\underline{\xi_{i}}%
\underline{\xi_{j}}, \qquad \bar{g}^{AB}\tilde{\nabla}_{A}\xi_{B}\equiv
\underline{\overline{C^{\alpha\beta}\;{}^{(C)}\nabla_{\alpha}\Xi_{\beta }}}
\end{equation*}
have admissible expansions because $\underline{\xi_{i}}$ does.

The assumptions on $\underline{c^{ij}}$ and the identity $%
y^{h}\partial_{h}r\equiv r$ show that
\begin{equation*}
y^{h}\partial_{h}\underline{\bar{c}^{ij}\delta_{ij}} \equiv
r^{2}(y^{h}\partial_{h}\underline{\bar{d}^{ij}\delta_{ij}} +2\underline{\bar{%
d}^{ij}\delta_{ij}}),
\end{equation*}
hence $r^{-2}\Phi$, and $r^{-2}\Psi$, with $\Phi$ and $\Psi$ given above
have admissible expansions.

It remains to show the property for
\begin{equation*}
^{(C)}\bar{S}_{01}\equiv -({}^{(C)}\underline{\bar{S}_{00}}
+r^{-1}y^{i}\;{}^{(C)}\underline{\bar{S}}_{0i}).
\end{equation*}
Since $C$ is an analytic metric in a neighbourhood of $O$, $^{(C)}\underline{%
\bar{S}}_{00}$ has an admissible expansion. Denoting by $^{(C)}\underline{%
K_{ij}}\equiv -\frac{1}{2}\underline{\partial_{0}C_{ij}}$ the second
fundamental form of $C$ relative to the slicing of $R^{n+1}$ by $y^{0}=$
constant, we know that\footnote{%
See for instance~\cite[Chapter~6]{YvonneBook}.},
\begin{equation*}
^{(C)}\underline{\bar{S}_{0i}}\equiv \underline{\partial_{i}\,\mathrm{tr}%
{}^{(C)}K} -^{(C)}\underline{\bar{\nabla}_{j}(C^{jh}\,^{(C)}K_{ih})}.
\end{equation*}
We have
\begin{equation*}
^{(C)}\underline{\bar{\nabla}_{j}(C^{jh(C)}K_{ih})}\equiv \underline{C^{jh}}%
( \underline{\partial_{j}\,^{(C)}K_{ih} -C_{ji}^{k}{}^{(C)}K_{ik}
-C_{jh}^{k(C)}K_{jk}}) .
\end{equation*}
The functions $^{(C)}\underline{K_{ij}}$ are analytic and the Christoffel
symbols $\underline{C_{ji}^{k}}$ are products by $y^{0}$ of analytic
functions, while elementary computations give
\begin{equation*}
-2y^{i}\underline{C^{jh}}\underline{\partial_{j}\,^{(C)}K_{ih}}\equiv y^{i}%
\underline{C^{jh}}\underline{\partial_{j}\partial_{0}C_{ih}}\equiv
\underline{C^{jh}}\{\underline{\partial_{j}\partial_{0}(y^{i}c_{ih})}-%
\underline{\partial_{0}c_{jh}}\},
\end{equation*}
hence using $y^{i}\underline{c_{ih}}=0$ and $\delta^{jh}\underline{c_{jh}}%
=(y^{0})^{2}Z$. We deduce from these results that $r^{-1}y^{i}{}^{(C)}%
\underline{\bar{S}}_{0i}$ also has an admissible expansion.

The proof is complete.
\end{proof}

\section{Conclusions}

\subsection{An existence theorem}

We have shown that when the metric $C$ given by (\ref{4.3}) which induces on
$C_{O}$ the basic characteristic data $\tilde{g}$ (i.e. $\bar{g}_{AB}\equiv
\bar{C}_{AB}\equiv \tilde{C}_{AB})$ is analytic, then the functions $\nu
_{0} $, $\underline{\nu _{i}}$, $\underline{\bar{g}_{00}}$ have admissible
expansions. We have shown that if moreover $C$ is near-round at the vertex
(definition \ref{7.2}), then the functions $r^{-2}(\nu _{0}-1)$, $r^{-1}%
\underline{\nu _{i}}$, and $r^{-2}(\underline{\bar{g}_{00}}+1)$
have also admissible expansions. These results imply the
following theorem (the notations are those of section 2):

\begin{theorem}
If the metric $C$ given by (\ref{4.3})-(\ref{4.4}) which induces the basic
characteristic data on the cone $C_{O}^{T}$ is smooth everywhere, and
moreover analytic and near-round in a neighbourhood of the vertex, then
there exists a number $T_{0}>0$ such that the wave-gauge reduced vacuum
Einstein equations with characteristic initial determined by $C$ and the
solution of the wave-map gauge constraints have a solution in $Y_{O}^{T_{0}}$
which induces on $C_{O}^{T_{0}}$ the same quadratic form as $C$.
\end{theorem}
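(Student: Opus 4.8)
The plan is to obtain the theorem as a straightforward application of the Cagnac--Dossa type existence Theorem~\ref{ThDossanew1} to the wave-map gauge reduced vacuum Einstein equations, which are of the form (\ref{2.1}); the only nontrivial point is checking that the trace $\bar{g}$ furnished by the wave-map gauge constraints provides admissible characteristic data, after which one invokes the equivalence between the reduced and the full Einstein equations proved in I. Concretely, given $C$ one takes the intrinsic data $\tilde{g}:=\tilde{C}$, i.e. $\bar{g}_{AB}:=\bar{C}_{AB}$ and $\bar{g}_{1A}=\bar{g}_{11}=0$, and lets $\nu_{0}$, $\nu_{A}$, $\bar{g}_{00}$ be the solutions of $\mathcal{C}_{1}=\mathcal{C}_{A}=\mathcal{C}_{0}=0$ with the limit conditions at $O$ fixed by the choice of coordinates, as constructed in Sections 7--9; the candidate data $\varphi$ for (\ref{2.1}, \ref{2.5}) is then the collection of components $\underline{\bar{g}_{\alpha\beta}}$ in the $y$ coordinates, given by (\ref{4.5})--(\ref{4.6}).

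I would first dispose of the easy hypotheses of Theorem~\ref{ThDossanew1}. Hypotheses 1 and 2 concerning the coefficients $g^{\lambda\mu}$ and the source $f$ of the reduced equations are recalled in Section 2. Since $C$ takes Minkowskian values at $O$ and $\bar{g}_{AB}=\bar{C}_{AB}$ with $\nu_{0}\to1$, $\nu_{A}\to0$, $\bar{g}_{00}\to-1$, $\bar{C}_{AB}=O(r^{2})$ at the vertex, (\ref{4.5})--(\ref{4.6}) give $\underline{\bar{g}_{\alpha\beta}}(O)=\eta_{\alpha\beta}$, so $\varphi(O)=0$; possibly after decreasing $T$ one may assume $\bar{g}$ stays a Lorentzian quadratic form of the required type along $C_{O}^{T}$ (near $O$ because $\varphi(O)=0$, elsewhere because $C$ is Lorentzian, $\bar{C}_{AB}$ positive definite and $\nu_{0}>0$), so $\varphi$ takes its values in $W$, as required in hypothesis 3a. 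The null-cone condition in 3a is automatic: by Lemma~\ref{lemma3.1}, $\underline{\bar{g}_{00}}+2r^{-1}y^{i}\underline{\bar{g}_{0i}}+r^{-2}y^{i}y^{j}\underline{\bar{g}_{ij}}$ equals $\bar{g}_{11}$, which vanishes by construction, and this is precisely condition (\ref{2.6}) that $\ell$ be null for $g^{\lambda\mu}(y,\varphi)$.

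The heart of the matter is hypothesis 3b, that each $\underline{\bar{g}_{\alpha\beta}}$ is the trace on $C_{O}^{T}$ of a smooth spacetime function. Away from the vertex this is automatic, $C_{O}^{T}\setminus\{O\}$ being a smooth hypersurface along which $\bar{g}$ is smooth, the constraint ODEs having smooth coefficients for $r>0$. Near $O$ one invokes the admissible-series criterion of Section 6: it suffices to exhibit an admissible expansion for each $\underline{\bar{g}_{\alpha\beta}}$. Here analyticity of $C$ by itself is not enough: it yields only admissible series of minimal order $2$ for $\nu_{0}-1$, $\underline{\nu_{i}}$, $\bar{g}_{00}+1$, and dividing such a series by $r^{2}$ destroys admissibility, whereas (\ref{4.5})--(\ref{4.6}) contain exactly $r^{-1}$ and $r^{-2}$ factors multiplying these quantities. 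The role of the near-roundness of $C$ (Definition~\ref{Def}) is precisely that, by the theorems established in Sections 7, 8 and 9, it makes $r^{-2}(\nu_{0}-1)$, $r^{-2}\underline{\nu_{i}}$ and $r^{-2}(\bar{g}_{00}+1)$ \emph{themselves} admissible. Granting this, and using that $y^{i}=r\Theta^{i}$, that $r^{2}=\sum(y^{i})^{2}$ is analytic in $y$, that products of admissible series are admissible, and that $r$ times an admissible series is again admissible (remark following the criterion), one reads off from (\ref{4.5})--(\ref{4.6}) that $\underline{\bar{g}_{00}}=r^{2}\bigl[r^{-2}(\bar{g}_{00}+1)\bigr]-1$, that $\underline{\bar{g}_{0i}}=-r\bigl(\bigl[r^{-2}(\bar{g}_{00}+\nu_{0})\bigr]y^{i}\bigr)-\underline{\nu_{i}}$ with $r^{-2}(\bar{g}_{00}+\nu_{0})=r^{-2}(\bar{g}_{00}+1)+r^{-2}(\nu_{0}-1)$, and that $\underline{\bar{g}_{ij}}=\underline{\bar{C}_{ij}}+\bigl[r^{-2}(\bar{g}_{00}+1)+2r^{-2}(\nu_{0}-1)\bigr]y^{i}y^{j}+r\bigl(\bigl[r^{-2}\underline{\nu_{j}}\bigr]y^{i}+\bigl[r^{-2}\underline{\nu_{i}}\bigr]y^{j}\bigr)$ are all admissible. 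Hence $\varphi$ is near $O$ the trace of a function analytic there; combined with smoothness away from $O$ and a partition of unity, $\varphi$ is the trace on $C_{O}^{T}$ of a function smooth on a neighbourhood of $C_{O}^{T}$, and hypothesis 3b holds. I expect this to be the only real step; everything around it is either cited or routine.

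It then remains to read off the conclusion. Theorem~\ref{ThDossanew1} gives a number $0<T_{0}\leq T$ and a unique solution $g$ of the reduced equations on $Y_{O}^{T_{0}}$, extending smoothly across $O$. Because $\bar{g}$ solves the wave-map gauge constraints by construction, the equivalence theorem of I ensures that the wave-gauge vector $H$ vanishes on $Y_{O}^{T_{0}}$ (it obeys a linear homogeneous wave equation with vanishing data on $C_{O}$), so $g$ is in wave-map gauge and solves the full vacuum Einstein equations there. Finally, since $\bar{g}_{AB}=\bar{C}_{AB}$, the degenerate quadratic form induced by $g$ on $C_{O}^{T_{0}}$ is $\bar{g}_{AB}\,dx^{A}dx^{B}=\bar{C}_{AB}\,dx^{A}dx^{B}=\tilde{C}$, i.e. the same as the one induced by $C$, which is the assertion.
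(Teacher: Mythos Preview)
Your proof is correct and follows essentially the same route as the paper: show that the components $\underline{\bar g_{\alpha\beta}}$ determined by (\ref{4.5})--(\ref{4.6}) have admissible expansions near $O$ thanks to the near-roundness results of Sections~7--9, and then invoke the Cagnac--Dossa Theorem~\ref{ThDossanew1}. The paper's own proof is much terser, simply citing formulas (\ref{10.1})--(\ref{10.2}) and ``the theorems of previous sections'' before applying Cagnac--Dossa, whereas you spell out the verification of hypotheses 1, 2, 3a and the admissibility bookkeeping explicitly; note also that your final paragraph, passing from the reduced to the full Einstein equations via the vanishing of $H$, is actually the content of the \emph{next} theorem in the paper and is not needed for the present statement.
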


\begin{proof}
It results from the formulae
\begin{equation}
\underline{\bar{g}_{00}}\equiv \bar{g}_{00},\quad \underline{\bar{g}_{0i}}%
\equiv -\{\bar{g}_{00}+\nu _{0})\}r^{-1}y^{i}-\underline{\nu _{i}},\quad
\text{with}\quad y^{i}\underline{\nu _{i}}\equiv 0\;,  \label{10.1}
\end{equation}
\begin{equation}
\quad \underline{\bar{g}_{ij}}-\delta _{ij}\equiv \{\bar{g}_{00}+1+2(\nu
_{0}-1)\}r^{-2}y^{i}y^{j}+r^{-1}(y^{i}\underline{\nu _{j}}+y^{j}\underline{%
\nu _{i}})+\underline{c_{ij}},  \label{10.2}
\end{equation}
and the theorems of previous sections that $\underline{\bar{g}_{00}}+1$, $%
\underline{\bar{g}}_{0i}$, and $\underline{\bar{g}}_{ij}-\delta _{ij}$ have
admissible expansions, hence are the trace on $C_{O}$ of analytic functions.
We apply the Cagnac-Dossa theorem.
\end{proof}

This theorem and the results of I lead then to the following sufficient
condidions for the existence of a solution of the full Einstein equations.

\begin{theorem}
If the metric $C$ given by (\ref{4.3}) which induces the basic
characteristic data on the cone $C_{O}$ is smooth, and analytic and
near-round at the vertex, there exists a vacuum Einsteinian spacetime $%
(Y_{O}^{T_{0}},g)$ which induces on $C_{O}^{T_{0}}$ the same quadratic form
as $C$. The solution is locally geometrically unique.
\end{theorem}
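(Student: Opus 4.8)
The plan is to assemble three facts that are now all available: the existence theorem for the wave-gauge reduced vacuum Einstein system stated immediately above, the result of~I that the wave-map gauge constraints are necessary and sufficient for a solution of the reduced hyperbolic system to solve the full Einstein equations, and the local geometric uniqueness theorem of~I.

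First I would simply invoke the preceding theorem. Its hypotheses -- $C$ smooth on $C_{O}^{T}$, analytic and near-round near $O$ -- are exactly the ones assumed here, so it already delivers a number $T_{0}>0$ and a solution $g$ of the reduced vacuum equations on $Y_{O}^{T_{0}}$, extending smoothly across $O$, whose trace on $C_{O}^{T_{0}}$ is the trace $\bar g$ obtained by integrating the wave-map gauge constraints from the basic data $\tilde g=\tilde C$; in particular $g$ induces on $C_{O}^{T_{0}}$ the same degenerate quadratic form $\tilde C$ as $C$. The point of Sections~7--9 was precisely that, under analyticity and near-roundness, the $y$-components $\underline{\bar g_{00}}+1$, $\underline{\bar g_{0i}}$, $\underline{\bar g_{ij}}-\delta_{ij}$ given by (\ref{10.1})--(\ref{10.2}) have admissible expansions and hence are traces on $C_{O}$ of analytic, in particular smooth, spacetime functions, so that hypothesis~3 of Theorem~\ref{ThDossanew1} holds; hypotheses~1--2 are immediate from (\ref{4.3})--(\ref{4.4}) and $\varphi(O)=0$.

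Next I would upgrade $g$ to a solution of the full vacuum Einstein equations using the result recalled from~I in Section~4.2: since $\bar g$ satisfies the wave-map gauge constraints $\mathcal{C}_{\alpha}=0$, the wave-gauge vector $H$ of $g$ has vanishing characteristic data on $C_{O}$, and because the Einstein tensor of $g$ splits as a quasidiagonal hyperbolic operator on $g$ plus a first-order operator on $H$ while $g$ solves the reduced system, the contracted Bianchi identities make $H$ satisfy a linear homogeneous hyperbolic system with coefficients built from $g$; hence $H\equiv 0$ on the domain of dependence $Y_{O}^{T_{0}}$, so $g$ is in wave-map gauge with Minkowski target and solves the full vacuum Einstein equations there. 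Local geometric uniqueness -- two vacuum Einstein metrics inducing the same $\tilde g$ on $C_{O}$ are isometric near $O$ -- is then the uniqueness theorem of~I applied with $\tilde g=\tilde C$. I do not expect a genuine obstacle in this concluding step: the real work is in Sections~7--9 (securing the admissible expansions needed for Theorem~\ref{ThDossanew1}) and in~I (constraint propagation via the hyperbolic system for $H$, and geometric uniqueness). The only point meriting care is that the vertex limit conditions used in~I to close the constraint hierarchy, and to arrange $\varphi(O)=0$, be compatible with the coordinate normalisation (\ref{4.3})--(\ref{4.4}); this is guaranteed by the residual coordinate freedom at $O$ noted after (\ref{3.4}) and in Section~4.2.
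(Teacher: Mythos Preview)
Your proposal is correct and follows exactly the route indicated by the paper, which itself offers no detailed argument beyond the sentence ``This theorem and the results of I lead then to the following\ldots''; you have simply spelled out what that sentence means: apply the preceding existence theorem for the reduced system, invoke the constraint-propagation argument of~I to kill $H$ and hence upgrade to the full Einstein equations, and cite the geometric uniqueness theorem of~I.
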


\noindent{\textsc{Acknowledgements:} JMM was supported by the French ANR
grant BLAN07-1\_201699 entitled ``LISA Science'', and also in part by the
Spanish MICINN project FIS2009-11893. PTC was supported in part by the
Polish Ministry of Science and Higher Education grant Nr N N201 372736. }

\providecommand{\bysame}{\leavevmode\hbox
to3em{\hrulefill}\thinspace} \providecommand{\MR}{\relax\ifhmode\unskip%
\space\fi MR }
\providecommand{\MRhref}[2]{  \href{http://www.ams.org/mathscinet-getitem?mr=#1}{#2}
} \providecommand{\href}[2]{#2}


\begin{thebibliography}{9}
\bibitem{YvonneBook} Y.~Choquet-Bruhat, \emph{General relativity and the {E}%
instein equations}, Oxford Mathematical Monographs, Oxford University Press,
Oxford, 2009. \MR{MR2473363}

\bibitem{CCM3} Y.~Choquet-Bruhat, P.T. Chru\'{s}ciel, and J.M.
Mart\'in-Garc\'ia, \emph{An existence theorem for the Cauchy problem on a
characteristic cone for the Einstein equations}, Cont.\ Math. (2010), in
press, Proceedings of ``Complex Analysis \& Dynamical Systems IV", Nahariya,
May 2009; arXiv:1006.5558 [gr-qc].

\bibitem{CCM2} \bysame, \emph{The Cauchy problem on a characteristic cone
for the Einstein equations in arbitrary dimensions}, Ann.\ H.\ Poincar\'e
(2010), in press, arXiv:1006.4467 [gr-qc].

\bibitem{ChBdWMII} Y.~Choquet-Bruhat and C.~DeWitt-Morette, \emph{Analysis,
manifolds and physics. {P}art {II}}, North-Holland Publishing Co.,
Amsterdam, 1989, 92 applications. \MR{MR1016603 (91e:58001)}

\bibitem{ChJezierskiCIVP} P.T. Chru\'{s}ciel and J.~Jezierski, \emph{On free
general relativistic initial data on the light cone}, (2010),
arXiv:1010.2098 [gr-qc].

\bibitem{DamourSchmidt} T.~Damour and B.~Schmidt, \emph{Reliability of
perturbation theory in general relativity}, Jour.\ Math.\ Phys. \textbf{31}
(1990), 2441--2453. \MR{MR1072957 (91m:83007)}

\bibitem{RendallCIVP} A.D. Rendall, \emph{Reduction of the characteristic
initial value problem to the {C}auchy problem and its applications to the {E}%
instein equations}, Proc.\ Roy.\ Soc.\ London A \textbf{427} (1990),
221--239. \MR{MR1032984
  (91a:83004)}
\end{thebibliography}
\end{document}